\def\dout{\bgroup
 \markoverwith{\lower-0.2ex\hbox
 {\kern-.03em\vbox{\hrule width.2em\kern0.45ex\hrule}\kern-.03em}}%
 \ULon}
\newtheorem{theorem}{Theorem}
\newtheorem{definition}{Definition}
\newtheorem{lemma}{Lemma}
\begin{document}
\title{On the Fundamental Limits of Coded Caching Systems with Restricted Demand Types
}

\author{Shuo Shao, Jesús Gómez-Vilardebó, Kai Zhang, and Chao Tian\thanks{S. Shao is with Shanghai Jiao Tong University, SEIEE. (e-mail:shuoshao@sjtu.edu.cn)

J. Gómez-Vilardebó is with Centre Tecnològic de Telecomunicacions de Catalunya (CTTC/CERCA). 

K. Zhang was with Texas A\&M University, Department of Electrical and Computer Engineering during this work and he is now with UTHealth School of Biomedical Informatics. 

C. Tian are with Texas A\&M University, Department of Electrical and Computer Engineering. 

This paper is published on IEEE Transaction on Communications, doi: 10.1109/TCOMM.2020.3038392.

The work of Shuo Shao was supported
by NSFC-61872149, NSFC-61901261 and 19YF1424200. The work of
J. Gómez-Vilardebó was supported in part by the Catalan Government under
Grant GR2017-1479, and in part by the Spanish Government under Grant
TI2018-099722-B-I00 (ARISTIDES). The work of K. Zhang and C. Tian
was supported in part by the National Science Foundation under Grants CCF-18-32309 and CCF-18-16546. }}


\maketitle
\begin{abstract}
 Caching is a technique to reduce the communication load in peak hours by prefetching contents during off-peak hours. 
An information theoretic framework for coded caching was introduced by Maddah-Ali and Niesen in a recent work, where it was shown that significant improvement can be obtained compared to uncoded caching. Considerable efforts have been devoted to identify the precise information theoretic fundamental limits of the coded caching systems, however the difficulty of this task has also become clear.
One of the reasons for this difficulty is that the original coded caching setting allows all possible multiple demand types during delivery, which in fact introduces tension in the coding strategy. In this paper, we seek to develop a better understanding of the fundamental limits of coded caching by investigating systems with certain demand type restrictions. We first consider the canonical three-user three-file system, and show that, contrary to popular beliefs, the worst demand type is not the one in which all three files are requested. Motivated by these findings, we focus on coded caching systems where every file  must be requested by at least one user. A novel coding scheme is proposed, which can provide new operating points that are not covered by any previously known schemes. 
\end{abstract}

\section{Introduction}
Caching is a technique to alleviate communication load during peak hours by prefetching certain contents to the memory of the end users during off-peak hours. 
Maddah-Ali and Niesen \cite{MAN-14} proposed an information theoretic framework for caching systems, and showed that coded caching strategies can achieve significant improvement over uncoded caching strategies.
The system in the proposed framework, which has $N$ files and $K$ users, operates in two phases: during the {\em prefetching phase}, each user fills the cache memory of size $M$ with information on the files, and during the 
{\em delivery phase}, the users first reveal their requests, then the central server broadcasts a message of size $R$ to all the users, and finally each user makes use of the received message together with the content in the cache memory to reconstruct the requested file.

The optimal tradeoff between $M$ and $R$ is of  fundamental importance in this setting, the characterization of which has attracted significant research effort. Several inner bounds have been obtained using strategies based on either uncoded prefetching or coded prefetching \cite{MAN-14,Yu-18,Shraei-16,T-C-18,Jesus-18,Chen-16,Kumar-19,Kumar-19-2,Amiri-17,Z-T-18,Cao-20}, and various outer bounds have also been discovered \cite{Tian-16,Tian-18,Yu-19,Ghasemi-17,Wang-18}. When the prefetched contents are required to be uncoded fragments of the original files, the optimal tradeoff between $M$ and $R$ was fully characterized in \cite{Yu-18}. However, it was also shown that in general, optimal tradeoff requires coded prefetching strategies \cite{MAN-14,Chen-16,Tian-18}. The fundamental limit of coded caching systems still remains largely unknown in general.

In the prefetching phase, the users have no prior knowledge on the demands in the delivery phase, the collection of which is jointly referred to as the demand vector. As such, the prefetched contents need to be properly designed to accommodate all possible demand vectors. In a recent work \cite{Tian-16} (see also \cite{Tian-18}), the notion of demand type was introduced to classify the possible demand vectors, which led to simplifications in studying outer bounds of the coded caching systems. From this perspective, the original setting of \cite{MAN-14} in fact allows all possible demand types, and it appears that one reason for the afore-mentioned difficulty is the tension among the coding requirements to accommodate these different demand types. Therefore, a natural question is how different demand types impact the optimal $(M,R)$ tradeoff. 

To develop a better understanding of this issue, in this work we consider caching systems with restricted demand types, where during the prefetching phase, the users and the server know a priori that the demand vector in the delivery phase must be from a certain class of demand types. Such systems clearly have a more relaxed coding requirement than the original setting due to the prior knowledge on the possible demands, however, it is still highly nontrivial since each demand type would allow a rich set of possible demand vectors. Because of the relaxed coding requirement, any scheme designed to accommodate all possible demand types is valid for a system with restricted demand types, however, outer bounds for a system allowing all possible demand types may not hold for a system with restricted demand types. 


We begin our study by first collecting the best known inner bounds and outer bounds in the literature for the canonical $(N,K)=(3,3)$ system, some of which are from very recent developments \cite{Jesus-18,Kumar-19,Kumar-19-2,Cao-20} in the area. This exercise reveals that although the setting where all files are requested by at least one user may pose a significant challenge for the optimal code design, a system where only such demand types are allowed can in fact achieve $(M,R)$ pairs that are strictly impossible for systems where certain other demand types are allowed. This is contrary to popular belief that such a demand type is the ``worst case", and also confirms the existence of a tension among the coding requirements for different demand types. 

Given the observation above, we focus on one class of systems with restricted demand types, where it is known a prior that every file is requested by at least one user (implying that $K\geq N$).  We propose a novel code construction which provides improved performance to the known memory-rate tradeoff in the literature for such systems. The proposed scheme can be viewed as a novel approach to utilize the interference alignment technique \cite{Cadambe:08}, where some prefetched and transmitted symbols are aligned together to cancel certain interfering signal, in order for the desired signals to be recovered. Moreover, a novel pairwise transformation is introduced in the construction, in order to more efficiently take advantage of such alignment opportunities. 
We note that the proposed construction generalizes the code recently proposed in \cite{Kumar-19}, however, in contrast to the code specifically designed for $N=K$ that yields a single $(M,R)$ pair in each case, our construction is for general $(N,K)$ where $K\geq N$, and it provides multiple new operating points for each case. 

\section{Caching Systems with Restricted Demand Types}


In an $(N,K)$ coded caching system, there are $N$ mutually independent uniformly distributed files $(W_1,W_2,\dots,W_N)$, each of $F$ bits. There are $K$ users, each with a cache memory of capacity $MF$ bits; the set of all users  $\{1,2,\dots,K\}$ is denoted as $\mathcal{K}$. In the prefetching phase, user-$k$ stores some content, denoted as $Z_k$, in the local cache memory. Rigorously, we can define the prefetching process as follow. 
\begin{definition}[Prefetching scheme] In an $(N,K)$ coded caching system, we define a prefetching scheme under a given cache memory capacity $M$ as $K$ caching encoding functions denoted as $\bm{\Phi}_M=(\phi_{M,1},\phi_{M,2},\dots,\phi_{M,K})$, where each encoding function $\phi_{M,k}:\{0,1\}^{NF} \rightarrow \{0,1\}^{\lfloor MF\rfloor}$ for $k=1,2,\dots,K$ maps the $N$ files to each user's cache memory content. That is to say, 
\begin{equation*}
    Z_k=\phi_{M,k}(W_1,W_2,\dots,W_N)
\end{equation*}
for $k=1,2,\dots,K$;
\end{definition}

In the delivery phase, user-$k$, $k\in \mathcal{K}$, requests file $d(k)$. In this work, we consider the case that the demand vector $\bm{d}= ( d(1), d(2),\dots, d(K))$ belongs to a given set $\mathcal{D}$. With the given prefetching scheme $\bm{\Phi}_M$ and demand vector $\bm{d}$, the central server broadcasts a message $X_{\bm{\Phi}_M,\bm{d}}$ of $RF$ bits to every user, such that, together with the cached content, each user can decode the requested file, where $R$ is the rate of communication. We give the following rigorous definition of the communication rate. 

\begin{definition}[Achievable broadcast rate]
In an $(N,K)$ coded caching system, a communication rate $R(\bm{\Phi}_M,\bm{d})$ is $\epsilon$-achievable for given prefetching scheme $\bm{\Phi}_M$ and demand vector $\bm{d} = ( d(1), d(2),\dots, d(K))$, if and only if there exists:
\begin{itemize}
\item An message encoding function $f_{\bm{\Phi}_M,\bm{d}}: \{0,1\}^{NF} \rightarrow \{0,1\}^{\lfloor R(\bm{\Phi}_M,\bm{d})F\rfloor}$ for given prefetching scheme $\bm{\Phi}_M$ and each possible demand vector $\bm{d}$ that maps the $N$ files into the broadcast messages based on $\bm{d}$, thus
\begin{equation*}
   X_{\bm{\Phi}_M,\bm{d}}=f_{\bm{\Phi}_M,\bm{d}}(W_1,W_2,\dots,W_N);
\end{equation*}

\item $K$ decoding functions $\psi_{\bm{\Phi}_M,\bm{d},k}: \{0,1\}^{\lfloor MF\rfloor}\times \{0,1\}^{\lfloor R(\bm{\Phi}_M,\bm{d})F\rfloor} \rightarrow \{0,1\}^F$ that map the caching content of user $k$ given by $\phi_{M,k}$ and the broadcast message to an $\epsilon$-error estimate of the requested file of user $k$ for all $k=1,2,\dots,K$, thus
\begin{align*}
    & Pr(W_{d(k)}\neq \psi_{\bm{\Phi}_M,\bm{d},k}(\phi_{M,k}(W_1,W_2,\dots,W_N),X_{\bm{\Phi}_M,\bm{d}}))\\
    & \leq \epsilon.\end{align*}
for all $k=1,2,\dots,K$.
\end{itemize}
\end{definition}

The communication rate $R$ of the $(N,K)$ coded caching systems, for a given cache memory capacity $M$ and demand vector set $D$, is defined as the rate of the best scheme who can minimize its worst case rate over all possible demand vectors. That is to say, the formal definition of $R$ is as follow.
\begin{definition}[Communication rate]
In an $(N,K)$ coded caching system with demand restriction set $\mathcal{D}$, the communication rate $R$ of the broadcast message in the delivery phase for the prefetching memory size $M$ is called achievable if 
\begin{align*}
    R\geq \sup_{\epsilon>0}\inf_{F} & \min_{\bm{\Phi}_{M'}:M'\leq M} \max_{\bm{d}\in \mathcal{D}} \inf\{ R(\bm{\Phi}_{M'},\bm{d}):\\
    & R(\bm{\Phi}_{M'},\bm{d}) \text{is $\epsilon$-achievable}\}.
\end{align*}
\end{definition}

In the coded caching problem, the optimal tradeoff between $M$ and $R$ is the fundamental mathematical object of interest. In order to obtain some insight of this optimal tradeoff, we would like to introduce the notion of "demand type".
It was first introduced in \cite{Tian-16} (see also \cite{Tian-18}), which is restated below.
\begin{definition}[Demand types]
For a demand vector $\bm{d} = ( d(1), d(2),\dots, d(K))$ in an $(N,K)$ coded caching system, denote the number of users requesting file $W_n$ as $e_n$, where $n \in [1:N]$.  The demand type of $\bm{d}=( d(1), d(2),\dots, d(K))$ is the length-$n$ vector obtained by sorting the values $\bm{e}(\bm{d})=(e_1,e_2,\dots ,e_N)$ in a decreasing order, which is denoted $\vec{\bm{e}}(\bm{d})$.
\end{definition}

In this work, we consider caching systems with restricted demand types, i.e., it is known a priori that the demand vector must belong to a class of demand types. As a special case, if the class contains only one demand type, it will be referred to as a {\em single demand type system}. In this case, the demand vector set $\mathcal{D}=\mathcal{D}_{s}\triangleq\{\bm{d}:\vec{\bm{e}}(\bm{d})= \vec{\bm{e}}\}$ for some given demand type $\vec{\bm{e}}$. A second special case is when the class is restricted to be the demand types where all files are requested by at least one user, which is referred to as {\em the fully demanded system}. In this case, the demand vector set $\mathcal{D}=\mathcal{D}_{fd}\triangleq\{\bm{d}:\vec{\bm{e}}(\bm{d})~\text{such that}~e_i(\bm{d})>0, i=0,1,\ldots,N\}$. A third special case, where all possible demand vectors are allowed, is the original setting considered in \cite{MAN-14}, which will be referred to as the {\em fully mixed demand type system}. In this case, the demand vector set $\mathcal{D}=\mathcal{D}_{mixed}\triangleq\{1,2\dots,N\}^{K}$.

As an example, consider an $(N,K)=(3,3)$ system. When the demand vector $( d(1), d(2), d(3))=(2,2,1)$,  we have $\bm{e}(\bm{d})=(1,2,0)$ and $\vec{\bm{e}}(\bm{d})=(2,1,0)$. Therefore this demand vector belongs to the demand type $\bm{e}(\bm{d})=(2,1,0)$. More generally, demand type $\vec{\bm{e}}=(2,1,0)$ includes the following possible demand vectors in Table \ref{demand vectors}, which induce 6 different $\bm{e}(\bm{d})$ triples.  

\begin{table}[h]
\centering
\caption{Demand vectors of demand type $(2,1,0)$ and the corresponding $\bm{e}(\bm{d})$}\label{demand vectors}
\scalebox{0.9}{
\begin{tabular}{|c|c|}
\hline
$\bm{e}(\bm{d})=(2,1,0)$ & $(1,1,2),(1,2,1),(2,1,1)$\\\hline
$\bm{e}(\bm{d})=(1,2,0)$ & $(2,2,1),(2,1,2), (1,2,2)$\\\hline
$\bm{e}(\bm{d})=(2,0,1)$ &$(1,1,3),(1,3,1),(3,1,1)$\\\hline
$\bm{e}(\bm{d})=(1,0,2)$ &$(3,3,1),(3,1,3), (1,3,3)$\\\hline
$\bm{e}(\bm{d})=(0,2,1)$ &$(2,2,3),(2,3,2),(3,2,2)$\\\hline
$\bm{e}(\bm{d})=(0,1,2)$ &$(3,3,2),(3,2,3)(2,3,3)$\\\hline
\end{tabular}}
\end{table}

\section{Mixed Demand Type and Single Demand Type Systems: The $(3,3)$ Case}

In this section, we consider the canonical $(N,K)=(3,3)$ system, and collect the best known inner bounds and outer bounds in the literature for different caching systems depending on the imposed demand type restrictions. This exercise reveals several important insights and fundamental differences between systems with different demand type restrictions. 

\begin{figure}[tb]
\setlength{\belowcaptionskip}{-0.3cm}
  \centering
  \includegraphics[width=0.8\linewidth,draft=false]{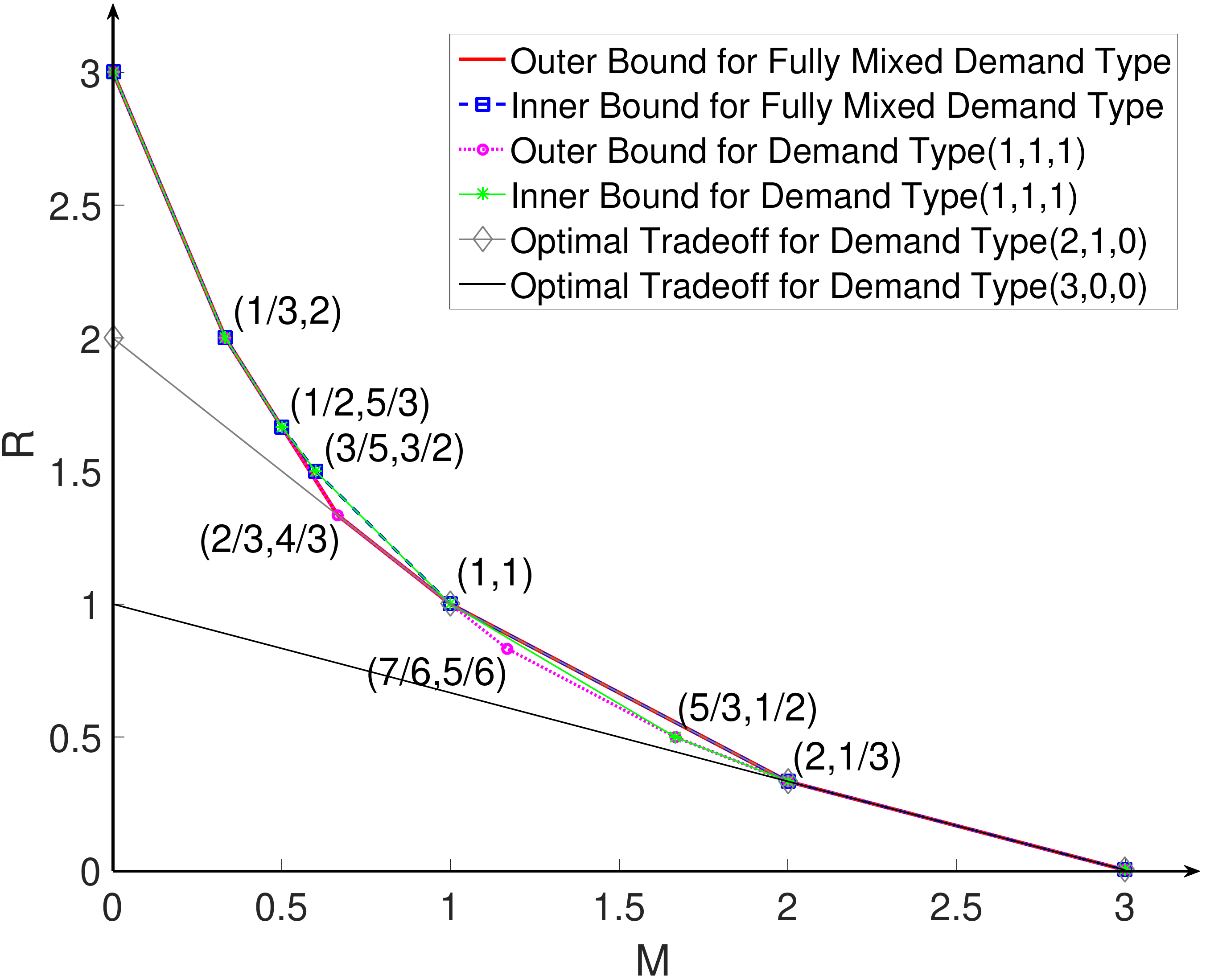}\\
  \caption{Inner bounds and outer bounds for various $(3,3)$ systems.\label{figouterboundmix}}
\end{figure}

\subsection{The Fully Mixed Demand Type System}

The best known outer bound for this system can be found in \cite{Tian-18}, which are all the non-negative pairs of $(M,R)$ satisfying the constraints
\begin{gather*}
3M+R\geq 3,  6M+3R\geq 8,  M+R\geq 2,\\
2M+3R\geq 5, M+3R\geq 3.
\end{gather*}
The best known inner bound, on the other hand, is given by the lower convex hull of the points
\begin{align*}
(0,3),(1/3,2),(1/2,5/3),(3/5,3/2),(1,1),(2,1/3),(3,0),
\end{align*}
where the second and third points are achieved by the scheme in \cite{Jesus-18}, the fourth point is achieved by that in \cite{Cao-20}, while the others can be achieved by that in \cite{MAN-14}.



\subsection{Single Demand Type Systems}

Next we provide the best known results for the three single demand type systems.
\begin{enumerate}
    \item For the system with the demand type $(3,0,0)$, i.e., the same file is requested by all three users, the achievable region is all the non-negative $(M,R)$ such that
    \begin{align}
        M+3R\geq 3,
    \end{align}
    i.e., in this case, the inner bound and the outer bound match. The outer bound can be obtained by a simple cut-set argument \cite{MAN-14}, while the inner bound is trivial through a memory-sharing argument.
    \item For the system with the demand type $(2,1,0)$, the achievable region is all the non-negative $(M,R)$ such that
    \begin{equation*}
    M+ R \geq  2,  2M+ 3R\geq 5, M+ 3R \geq 3.
    \end{equation*}
    In this case, the corner points $(1,1)$ and $(2,1/3)$ can be achieved using the scheme in \cite{MAN-14}, and the points $(0,2)$ and $(3,0)$ are trivial. The outer bound was established in \cite{Tian-18}.
    \item For the system with the demand type $(1,1,1)$, the best known outer bound is given in \cite{Tian-18} as
    \begin{gather*}
3M+ R \geq3,6M+ 3R  \geq 8, M+ R  \geq2, \\
12M+ 18R  \geq29, 3M+ 6R  \geq8, M+ 3R  \geq 3.
\end{gather*}
The best known inner bound is given by the lower convex hull of the points
\begin{gather*}
   (0,3), (1/3,2), (1/2,5/3), (3/5,3/2), (1,1), \\
   (5/3,1/2), (2,1/3), (3,0).
\end{gather*}
The second and third points are achieved by the scheme in \cite{Jesus-18}, the point $(3/5,3/2)$ by that in \cite{Cao-20}, the point $(5/3,1/2)$ by that in \cite{Kumar-19}, and the others by that in \cite{MAN-14}.
\end{enumerate}

\subsection{Mixed Demand Types vs. Single Demand Type}

By comparing the rate regions of different demand type systems in Fig.\ref{figouterboundmix}, we make the following observations:
\begin{enumerate}
\item The point $(5/3,1/2)$, which is achievable for the system with the single demand type $(1,1,1)$ as shown in \cite{Kumar-19}, is in fact not achievable for the $(2,1,0)$ demand type system, thus also not achievable for the fully mixed demand type system. 
\item Between mixed and single demand type systems, single demand type systems can indeed achieve lower rates than the fully mixed demand type system. 
\item Different single demand type systems provide different outer bounds for the fully mixed demand type system, with the one with fewer files demanded produce better bounds at high memory regimes, while those with more files being better at low memory regimes.
\end{enumerate}

The first observation implies that the case when all files are requested is not necessarily the ``worst case", contrary to popular belief in the coded caching literature. Thus designing codes for this demand type alone is not sufficient to yield the optimal scheme for the fully mixed demand type systems. In fact, the optimal code design for fully demanded systems, though by itself already a challenging problem, can even be ``simpler'' than other cases. Motivated by the observations above and the code construction proposed in \cite{Kumar-19}, which only provided a single operating point when $N=K$, in the sequel we focus on fully demanded systems (implying that $N\leq K$), for general $(N,K)$ parameters.

\section{Main Results}

 Given a demand type $\vec{\bm{e}}$ which does not contain any zero elements, i.e., each file is being requested by at least one user, let $p_{\vec{\bm{e}}}$ be the number of ones in it, i.e., the number of files requested by only one user. 
 The \textit{delivery rate compression saving factor}, $S(\vec{\bm{e}},r)$ for $r=\left\{ 0,...,K-1\right\}$, is defined as
\begin{align}
S(\vec{\bm{e}},r)=\frac{(K-p_{\vec{\bm{e}}})\binom{K-1-N}{r+1}+p_{\vec{\bm{e}}}\binom{K-1-(N-1)}{r+1}}{K\binom{K-1}{r}},    
\end{align}
where we have taken the convention that $\binom{k}{n}=0$ when $k\leq n$. The \textit{delivery rate compression saving factor} accounts for the rate reduction in the transmission rate when linear dependence may be found among transmitted symbols in the delivery phase. Observe that $S(\vec{\bm{e}},r)=0$ when $K=N$, and it generally increases when $K$ increases. 

The performance of the proposed scheme is summarized in the following theorem. 

\begin{theorem}
\label{thm1 iq channel} 
For an $(N,K)$ fully demand caching system (implying $N\leq K$) where only a single demand type $\vec{\bm{e}}$ is allowed, the following memory-rate pairs are achievable:
\begin{align}
\left(M,R(\vec{\bm{e})}\right)=\left(r\frac{N-1}{K-1}+\frac{r+1}{K},\frac{K-1-r}{r+1}-S(\vec{\bm{e}},r)\right)
\end{align}
for $r=\left\{ 0,...,K-1\right\}$.
\end{theorem}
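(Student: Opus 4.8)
The plan is to prove achievability by exhibiting an explicit prefetching-and-delivery scheme and then verifying, for each $r\in\{0,\dots,K-1\}$, the three defining quantities: that the cache occupies exactly $M=r\frac{N-1}{K-1}+\frac{r+1}{K}$ file-units, that every user recovers its requested file for \emph{every} $\bm{d}\in\mathcal{D}_s$ of the prescribed type $\vec{\bm{e}}$, and that the worst-case broadcast rate over $\mathcal{D}_s$ equals $\frac{K-1-r}{r+1}-S(\vec{\bm{e}},r)$. Since the two denominators $K-1$ and $K$ both appear in $M$, I would partition each file into two independently indexed families of subfiles and build the cache as the union of (i) an uncoded/MDS layer distributed over $K-1$ coordinates, accounting for the $\frac{r(N-1)}{K-1}$ term, and (ii) a coded-prefetching layer of size $\frac{r+1}{K}$ whose stored symbols are linear combinations across all $N$ files. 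Layer (ii) is the ingredient that enables interference alignment, generalizing the $N=K$ construction of \cite{Kumar-19}.

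For the delivery I would organize the broadcast into $K$ Maddah--Ali--Niesen-type sub-deliveries, one for each ``pivot'' user $k\in\mathcal{K}$, where the sub-delivery for pivot $k$ performs coded multicasting over the remaining $K-1$ users with caching parameter $r$. Each such sub-delivery contributes $\binom{K-1}{r+1}$ multicast symbols, so that against the file-size normalization $K\binom{K-1}{r}$ the raw transmission yields precisely the base rate $\frac{K-1-r}{r+1}$. The novelty enters in how the cross-group interference is handled: when user $j$ receives a symbol from the group piloted by some $k\neq j$, the contributions of files other than $W_{d(j)}$ must lie in the span of what $j$ has cached, so that $j$ can subtract them and isolate its desired subfile. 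I would design the pairwise transformation alluded to earlier so that exactly these interfering symbols align with the coded combinations of layer (ii), and then certify decodability subset-by-subset.

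The rate reduction $S(\vec{\bm{e}},r)$ should then be shown to arise from linear dependence among the transmitted symbols that is forced by the fully-demanded restriction. Within the group piloted by user $k$, a multicast symbol indexed by an $(r+1)$-subset becomes recoverable from the remaining symbols and the cache whenever that subset avoids a distinguished ``leader'' of each file still requested among the other $K-1$ users. If $k$ requests a file demanded by someone else as well, all $N$ files persist among the remaining users and $\binom{(K-1)-N}{r+1}=\binom{K-1-N}{r+1}$ symbols are omittable; if instead $k$ requests a singly-demanded file, removing $k$ erases that file entirely, leaving only $N-1$ leaders to avoid and $\binom{(K-1)-(N-1)}{r+1}=\binom{K-1-(N-1)}{r+1}$ omittable symbols. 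Summing over the $p_{\vec{\bm{e}}}$ pivots requesting singly-demanded files and the other $K-p_{\vec{\bm{e}}}$ pivots, and dividing by the normalization $K\binom{K-1}{r}$, reproduces exactly the claimed $S(\vec{\bm{e}},r)$; since these dependencies hold for every demand vector of the type, the omission is valid in the worst case.

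I expect the main obstacle to be establishing decodability and the exact dependency count \emph{simultaneously and uniformly} over all $\bm{d}\in\mathcal{D}_s$: the alignment that cancels interference for one user must not consume a degree of freedom another user needs, and the linear dependencies exploited for the rate saving must coexist with full recoverability. Making the pairwise transformation explicit enough to certify both properties at once, rather than for a single representative demand vector, is the delicate step. I would finally sanity-check the construction against the degenerate cases where it must collapse correctly, namely $N=K$ (where $S(\vec{\bm{e}},r)=0$ and the points reduce to those of \cite{Kumar-19}) and the endpoints $r=0$ and $r=K-1$ (the latter giving the trivial $(N,0)$).
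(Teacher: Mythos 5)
Your architecture matches the paper's: $K$ per-pivot sub-deliveries of Maddah--Ali--Niesen type over the remaining $K-1$ users, uncoded-plus-coded prefetching, and leader-based omission of redundant multicast symbols; your counting of the saving --- $\binom{K-1-N}{r+1}$ omittable symbols per pivot whose file is also requested by someone else, $\binom{K-1-(N-1)}{r+1}$ per pivot requesting a singly-demanded file --- is exactly the computation the paper performs. But as a proof there is a genuine gap, and it is the one you yourself flag and defer: the pairwise transformation and the decodability certificate are never constructed, and they are not a finishing touch --- they are the entire technical content of the achievability argument. The concrete obstruction you never identify is this: in a per-pivot scheme whose coded cache symbols mix all $N$ files (the paper caches column parities $Z^{A}_{\mathcal{R},k}=\bigoplus_{f}W^{A}_{f,\mathcal{R},k}$ and row parities of a product code), user $k$ recovers the segments indexed by its own pivot by XOR-ing cached and delivered symbols over all other users, and the contribution of a file $W_f$ collapses to a single segment only because that segment appears once per user requesting $W_f$ --- i.e., only when $|\mathcal{K}_f|$ is odd. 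When some file is requested by an even number of users (unavoidable for many demand types with $K>N$), the segment cancels to zero and decoding fails. The paper's fix is precisely the I/Q doubling absent from your plan: each segment is split into two halves and an invertible GF(2) pair transformation is applied (identity for odd multiplicity; $\bigl[\begin{smallmatrix}1&1\\1&0\end{smallmatrix}\bigr]$ or $\bigl[\begin{smallmatrix}0&1\\1&1\end{smallmatrix}\bigr]$ for even multiplicity, with a distinguished matrix for the leader or the pivot), chosen so that the XOR of transformed pairs over any requester set telescopes back to the original segment pair; this identity is what makes the interference-alignment decoding equation valid for every demand vector of the type simultaneously.

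A secondary but related gap: your memory layering --- an uncoded/MDS layer of size $r(N-1)/(K-1)$ plus a coded layer of size $(r+1)/K$ mixing all files --- is not the structure that supports the decoding you describe, and you give no construction realizing it. The paper's cache decomposes differently: uncoded segments contributing $Nr/K$, column parities contributing $1/K$, and row parities contributing $(N-1)r/(K(K-1))$, where the row-parity count is reduced by an explicit linear-dependence lemma internal to the product code; each of the three ingredients plays a distinct role in the two decoding modes (symbol elimination for segments with $s\neq k$, alignment against the column parity for segments with $s=k$). Without exhibiting the transformation, the coded-cache structure, and these two decoding identities, the claimed pair $\bigl(r\frac{N-1}{K-1}+\frac{r+1}{K},\ \frac{K-1-r}{r+1}-S(\vec{\bm{e}},r)\bigr)$ is asserted rather than proved.
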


The proof of theorem is given in the next section. It will be clear from the proposed code construction that the restriction on the demand types simplifies the design of the prefetching. More precisely, it is  effective to prefetch certain coded symbols involving all files, because any file will be requested eventually in such systems. 
 
 Since in the proposed code construction, the prefetching strategy does not depend on the particular demand type, but only requires that all files are requested during the delivery phase, we can state the following theorem. 

\begin{theorem}
\label{thm2 iq channel} 
For an $(N,K)$ fully demanded caching system (implying $N\leq K$), the following memory-rate pairs are achievable:
\begin{align}
\left(M,R\right)=\left(r\frac{N-1}{K-1}+\frac{r+1}{K},\frac{K-1-r}{r+1}-S(\vec{\bm{e}^{\ast}},r)\right)
\end{align}
where the demand type $\vec{\bm{e}^{\ast}}$ is any that satisfies $p_{\vec{\bm{e}^{\ast}}}=\max(2N-K,0)$. 

\end{theorem}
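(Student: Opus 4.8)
The plan is to \emph{reduce Theorem~\ref{thm2 iq channel} to Theorem~\ref{thm1 iq channel}}, exploiting the fact noted just before the statement that the prefetching in the proposed construction does not depend on the demand type. Concretely, I would fix the very same prefetching used to establish Theorem~\ref{thm1 iq channel} at the memory value $M=r\frac{N-1}{K-1}+\frac{r+1}{K}$, which carries no dependence on $\vec{\bm{e}}$ and is therefore common to all demand types. Upon revealing a demand vector $\bm{d}\in\mathcal{D}_{fd}$, the server learns its type $\vec{\bm{e}}(\bm{d})$ --- itself a fully demanded type --- and applies the type-specific delivery of Theorem~\ref{thm1 iq channel}. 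Since that theorem already accommodates \emph{every} demand vector of a given type (its rate is the worst case within the type), this yields rate $R(\vec{\bm{e}}(\bm{d}))=\frac{K-1-r}{r+1}-S(\vec{\bm{e}}(\bm{d}),r)$ for each $\bm{d}$.

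By the definition of the communication rate for the demand set $\mathcal{D}_{fd}$, the relevant quantity is the worst case over all admissible demands, which with the fixed prefetching above equals
\[
\max_{\bm{d}\in\mathcal{D}_{fd}} R(\vec{\bm{e}}(\bm{d})) \;=\; \frac{K-1-r}{r+1}-\min_{\vec{\bm{e}}} S(\vec{\bm{e}},r),
\]
the minimum ranging over all fully demanded types. The whole problem thus collapses to identifying the type that \emph{minimizes} $S(\vec{\bm{e}},r)$, after which $\vec{\bm{e}^{\ast}}$ is simply any such minimizer (all minimizers give the same rate, since $S$ depends on the type only through $p_{\vec{\bm{e}}}$).

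The next step is a monotonicity observation. Rewriting the numerator of $S$ as $K\binom{K-1-N}{r+1}+p_{\vec{\bm{e}}}\bigl(\binom{K-N}{r+1}-\binom{K-1-N}{r+1}\bigr)$, I would argue that $S$ is \emph{non-decreasing} in $p_{\vec{\bm{e}}}$, because the bracketed coefficient is nonnegative: a binomial coefficient is non-decreasing in its upper argument, so $\binom{K-N}{r+1}\ge\binom{K-1-N}{r+1}$ once one also checks the boundary cases under the convention $\binom{k}{n}=0$ for $k\le n$. Consequently $\min_{\vec{\bm{e}}}S(\vec{\bm{e}},r)$ is attained at the smallest feasible value of $p_{\vec{\bm{e}}}$.

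It remains to show that this smallest feasible $p$ equals $\max(2N-K,0)$, which I would establish by a short counting argument. In any fully demanded type the occupancy vector satisfies $\sum_n e_n=K$ with every $e_n\ge 1$, and $p_{\vec{\bm{e}}}$ counts the indices with $e_n=1$. Minimizing $p$ is the same as maximizing the number $q=N-p$ of files requested by at least two users; those $q$ files consume at least $2q$ requests while the remaining $N-q$ single-user files consume $N-q$, so feasibility forces $2q+(N-q)\le K$ together with $q\le N$, giving $q\le\min(K-N,N)$ and hence $p\ge\max(2N-K,0)$, with equality realized by an explicit allocation. Matching this with $p_{\vec{\bm{e}^{\ast}}}=\max(2N-K,0)$ closes the argument. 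The only genuinely delicate point I anticipate is the monotonicity verification: the non-standard zero-convention forces a careful case split on whether $K-N\le r+1$, and one must confirm the coefficient of $p_{\vec{\bm{e}}}$ never turns negative there; everything else follows directly from Theorem~\ref{thm1 iq channel} and elementary counting.
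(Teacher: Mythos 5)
Your proposal is correct and follows essentially the same route as the paper's own proof: fix the demand-type-independent prefetching, invoke Theorem~\ref{thm1 iq channel} per revealed type, and note that the worst case over fully demanded types is obtained by minimizing $S(\vec{\bm{e}},r)$, equivalently minimizing $p_{\vec{\bm{e}}}$, whose smallest feasible value is $\max(2N-K,0)$. The only difference is one of completeness: you explicitly verify the monotonicity of $S$ in $p_{\vec{\bm{e}}}$ (via the nonnegative coefficient $\binom{K-N}{r+1}-\binom{K-1-N}{r+1}$) and prove the counting lower bound $p_{\vec{\bm{e}}}\geq\max(2N-K,0)$, both of which the paper asserts only implicitly by exhibiting the minimizing demand types.
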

\begin{proof}
We will need to find the largest rate $R(\vec{\bm{e})}$ in Theorem \ref{thm1 iq channel}. The choice of $\vec{\bm{e}}$ that maximizes $R(\vec{\bm{e}})$ minimizes $S(\vec{\bm{e}},r)$, which in turn minimizes $p_{\vec{\bm{e}}}$, i.e., the number of files that are requested only by one user. If $K\geq2N$, we can set $p_{\vec{\bm{e}^{\ast}}}=0$ by choosing a demand type where every file is requested by at least two users. On the other hand, if $N\leq K<2N$, we minimize $p_{\bm{e}}$ by choosing a demand type where $N-p_{\vec{\bm{e}}}$ files are requested by one user and the other $N-p_{\vec{\bm{e}}}$ files are requested by two users, i.e., $p_{\vec{\bm{e}^{\ast}}}+2(N-p_{\vec{\bm{e}^{\ast}}})=K$. 
\end{proof} 

In an earlier version of this work \cite{SGZT-19}, we proposed a construction for $(N,K=N)$ systems, for which Theorem \ref{thm1 iq channel} provides a natural generalization. For $(N,K=N)$ fully demanded systems, by setting $r=K-2$, we recover the operating point given in \cite{Kumar-19}. In fact, our proposed construction in this work (sans the pairwise transformation) and that given in \cite{SGZT-19} essentially specialize to that in \cite{Kumar-19} by a proper relabeling of the file segments. On the other hand, setting $r=0$ gives the point in \cite{Chen-16} (see also \cite{T-C-18}). Operating points for other values of $r$ are previously unknown to be achievable.

The performance of the proposed construction is illustrated for $(N,K)=(4,6)$ and demand type $(3,1,1,1)$ in Fig. \ref{fig:46}; the operating points $(14/15,1.9)$, $(17/10,1)$, $(37/15,1/2)$ and $(97/30, 1/5)$ are previously unknown to be achievable in the existing literature. 

As mentioned earlier, there exists a tension among the coding requirements of different demand types. Since our coding scheme is designed specifically for fully demanded systems, its performance naturally suffers when some files are not requested. In practice, this implies that the proposed code is more suitable when  with high probability all files will be requested. When it does occur that some files are not requested,  a straightforward solution is simply to use a higher rate of delivery. More precisely, the server can additionally transmit some parity check symbols of the files that are not being requested, and it can be shown that the additional communication cost in upper-bounded by $\frac{K}{r+1}-S(\vec{\bm{e}^{\ast}},r)$. Since the probability of these events happening is extreme low, the overall system performance would not be impacted significantly.
In other settings where some files are not being requested more frequently, the proposed code is less suitable; nevertheless, at the expense of losing some performance, the virtual user technique given in \cite{Jesus-18-1} can be evoked to accommodate such situations.

\begin{figure}[tb] 
\centering 
\includegraphics[width=0.8\linewidth]{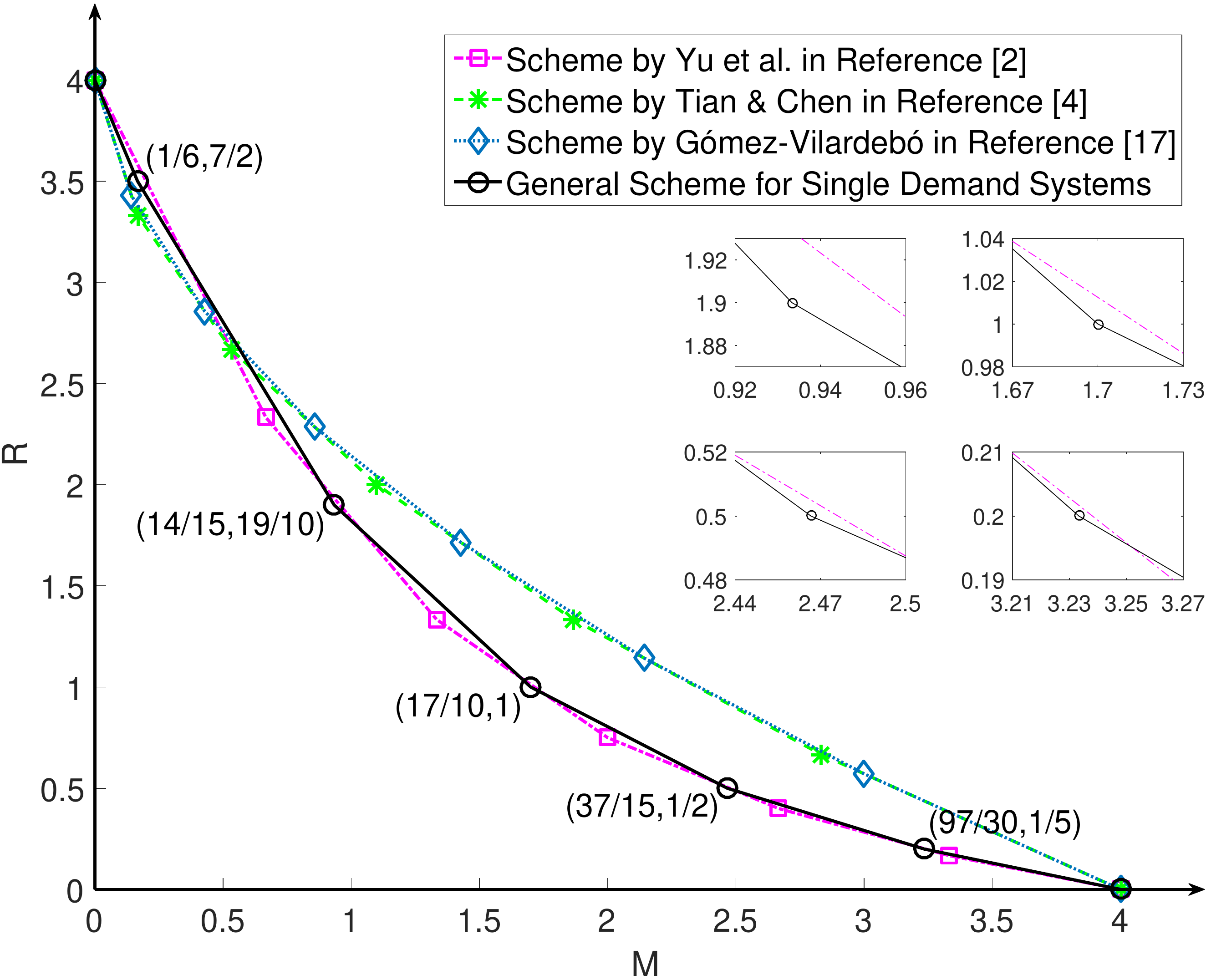}


\caption{Inner bounds for $(N,K)=(4,6)$ and demand type $(3,1,1,1)$.}
\label{fig:46}
\end{figure}

\section{Coding Scheme for Fully Demanded Coded Caching Systems}\label{general scheme}

The proposed code construction consists of multiple components which rely on each other to function as a whole. These components are: a novel file partition strategy, a prefetching strategy, a pairwise transformation, a delivery strategy, and a decoding strategy. The construction is parametrized by an integer $r$, whose meaning will become clear in the sequel. Throughout the section, we use $(N,K)=(3,6)$ and $r=1$ as our running example, and will provide details for this example whenever possible. 

\subsection{A Novel File Partition Strategy}

\begin{table*}[htbp]
\caption{File partition of $W_1$ in $(N,K)=(3,6)$ 
code caching system when $r=1$}
\label{table: file partition}
\centering
\scalebox{0.85}{
\scriptsize{
\begin{tabular}{|c|c|c|c|c|c|c|}
\hline
\diagbox[dir=SE]{$s$}{$\mathcal{R}$}& $\mathcal{R}=\{1\}$ &$\mathcal{R}=\{2\}$ & $\mathcal{R}=\{3\}$ & $\mathcal{R}=\{4\}$ & $\mathcal{R}=\{5\}$ & $\mathcal{R}=\{6\}$\\\hline
\multirow{3}{*}{\begin{tabular}[c]{@{}c@{}} $s=1$ \end{tabular}}
&
&\textcolor{blue}{$W^{I}_{1,\{2\},1}$, $W^{Q}_{1,\{2\},1}$} 
&\textcolor{blue}{$W^{I}_{1,\{3\},1}$, $W^{Q}_{1,\{3\},1}$}
&\textcolor{blue}{$W^{I}_{1,\{4\},1}$, $W^{Q}_{1,\{4\},1}$}
&\textcolor{blue}{$W^{I}_{1,\{5\},1}$, $W^{Q}_{1,\{5\},1}$}
&\textcolor{blue}{$W^{I}_{1,\{6\},1}$, $W^{Q}_{1,\{6\},1}$}\\
&  
&\textcolor{blue}{$W^{I}_{2,\{2\},1}$, $W^{Q}_{2,\{2\},1}$} 
&\textcolor{blue}{$W^{I}_{2,\{3\},1}$, $W^{Q}_{2,\{3\},1}$}
&\textcolor{blue}{$W^{I}_{2,\{4\},1}$, $W^{Q}_{2,\{4\},1}$}
&\textcolor{blue}{$W^{I}_{2,\{5\},1}$, $W^{Q}_{2,\{5\},1}$}
&\textcolor{blue}{$W^{I}_{2,\{6\},1}$, $W^{Q}_{2,\{6\},1}$}\\
&  
&\textcolor{blue}{$W^{I}_{3,\{2\},1}$, $W^{Q}_{3,\{2\},1}$} 
&\textcolor{blue}{$W^{I}_{3,\{3\},1}$, $W^{Q}_{3,\{3\},1}$}
&\textcolor{blue}{$W^{I}_{3,\{4\},1}$, $W^{Q}_{3,\{4\},1}$}
&\textcolor{blue}{$W^{I}_{3,\{5\},1}$, $W^{Q}_{3,\{5\},1}$}
&\textcolor{blue}{$W^{I}_{3,\{6\},1}$, $W^{Q}_{3,\{6\},1}$}\\
\hline
\multirow{3}{*}{\begin{tabular}[c]{@{}c@{}} $s=2$ \end{tabular}}
&\textcolor{red}{$W^{I}_{1,\{1\},2}$, $W^{Q}_{1,\{1\},2}$}
&
&$W^{I}_{1,\{3\},2}$, $W^{Q}_{1,\{3\},2}$ &$W^{I}_{1,\{4\},2}$, $W^{Q}_{1,\{4\},2}$ &$W^{I}_{1,\{5\},2}$, $W^{Q}_{1,\{5\},2}$ &$W^{I}_{1,\{6\},2}$, $W^{Q}_{1,\{6\},2}$\\
&\textcolor{red}{$W^{I}_{2,\{1\},2}$, $W^{Q}_{2,\{1\},2}$}
&
&$W^{I}_{2,\{3\},2}$, $W^{Q}_{2,\{3\},2}$ &$W^{I}_{2,\{4\},2}$, $W^{Q}_{2,\{4\},2}$ &$W^{I}_{2,\{5\},2}$, $W^{Q}_{2,\{5\},2}$ &$W^{I}_{2,\{6\},2}$, $W^{Q}_{2,\{6\},2}$\\
&\textcolor{red}{$W^{I}_{3,\{1\},2}$, $W^{Q}_{3,\{1\},2}$}
&
&$W^{I}_{3,\{3\},2}$, $W^{Q}_{3,\{3\},2}$ &$W^{I}_{3,\{4\},2}$, $W^{Q}_{3,\{4\},2}$ &$W^{I}_{3,\{5\},2}$, $W^{Q}_{3,\{5\},2}$ &$W^{I}_{3,\{6\},2}$, $W^{Q}_{3,\{6\},2}$\\
\hline
\multirow{3}{*}{\begin{tabular}[c]{@{}c@{}} $s=3$ \end{tabular}}
&\textcolor{red}{$W^{I}_{1,\{1\},3}$, $W^{Q}_{1,\{1\},3}$}
&$W^{I}_{1,\{2\},3}$, $W^{Q}_{1,\{2\},3}$
&
&$W^{I}_{1,\{4\},3}$, $W^{Q}_{1,\{4\},3}$ &$W^{I}_{1,\{5\},3}$, $W^{Q}_{1,\{5\},3}$ &$W^{I}_{1,\{6\},3}$, $W^{Q}_{1,\{6\},3}$\\
&\textcolor{red}{$W^{I}_{2,\{1\},3}$, $W^{Q}_{2,\{1\},3}$}
&$W^{I}_{2,\{2\},3}$, $W^{Q}_{2,\{2\},3}$
&
&$W^{I}_{2,\{4\},3}$, $W^{Q}_{2,\{4\},3}$ &$W^{I}_{2,\{5\},3}$, $W^{Q}_{2,\{5\},3}$ &$W^{I}_{2,\{6\},3}$, $W^{Q}_{2,\{6\},3}$\\
&\textcolor{red}{$W^{I}_{3,\{1\},3}$, $W^{Q}_{3,\{1\},3}$}
&$W^{I}_{3,\{2\},3}$, $W^{Q}_{3,\{2\},3}$
&
&$W^{I}_{3,\{4\},3}$, $W^{Q}_{3,\{4\},3}$ &$W^{I}_{3,\{5\},3}$, $W^{Q}_{3,\{5\},3}$ &$W^{I}_{3,\{6\},3}$, $W^{Q}_{3,\{6\},3}$\\
\hline
\multirow{3}{*}{\begin{tabular}[c]{@{}c@{}} $s=4$ \end{tabular}}
&\textcolor{red}{$W^{I}_{1,\{1\},4}$, $W^{Q}_{1,\{1\},4}$}
&$W^{I}_{1,\{2\},4}$, $W^{Q}_{1,\{2\},4}$
&$W^{I}_{1,\{3\},4}$, $W^{Q}_{1,\{3\},4}$
&
&$W^{I}_{1,\{5\},4}$, $W^{Q}_{1,\{5\},4}$ &$W^{I}_{1,\{6\},4}$, $W^{Q}_{1,\{6\},4}$\\
&\textcolor{red}{$W^{I}_{2,\{1\},4}$, $W^{Q}_{2,\{1\},4}$}
&$W^{I}_{2,\{2\},4}$, $W^{Q}_{2,\{2\},4}$
&$W^{I}_{2,\{3\},4}$, $W^{Q}_{2,\{3\},4}$
&
&$W^{I}_{2,\{5\},4}$, $W^{Q}_{2,\{5\},4}$ &$W^{I}_{2,\{6\},4}$, $W^{Q}_{2,\{6\},4}$\\
&\textcolor{red}{$W^{I}_{3,\{1\},4}$, $W^{Q}_{3,\{1\},4}$}
&$W^{I}_{3,\{2\},4}$, $W^{Q}_{3,\{2\},4}$
&$W^{I}_{3,\{3\},4}$, $W^{Q}_{3,\{3\},4}$
&
&$W^{I}_{3,\{5\},4}$, $W^{Q}_{3,\{5\},4}$ &$W^{I}_{3,\{6\},4}$, $W^{Q}_{3,\{6\},4}$\\
\hline
\multirow{3}{*}{\begin{tabular}[c]{@{}c@{}} $s=5$ \end{tabular}}
&\textcolor{red}{$W^{I}_{1,\{1\},5}$, $W^{Q}_{1,\{1\},5}$}
&$W^{I}_{1,\{2\},5}$, $W^{Q}_{1,\{2\},5}$
&$W^{I}_{1,\{3\},5}$, $W^{Q}_{1,\{3\},5}$
&$W^{I}_{1,\{4\},5}$, $W^{Q}_{1,\{4\},5}$
&
&$W^{I}_{1,\{6\},5}$, $W^{Q}_{1,\{6\},5}$\\
&\textcolor{red}{$W^{I}_{2,\{1\},5}$, $W^{Q}_{2,\{1\},5}$}
&$W^{I}_{2,\{2\},5}$, $W^{Q}_{2,\{2\},5}$
&$W^{I}_{2,\{3\},5}$, $W^{Q}_{2,\{3\},5}$
&$W^{I}_{2,\{4\},5}$, $W^{Q}_{2,\{4\},5}$
&
&$W^{I}_{2,\{6\},5}$, $W^{Q}_{2,\{6\},5}$\\
&\textcolor{red}{$W^{I}_{3,\{1\},5}$, $W^{Q}_{3,\{1\},5}$}
&$W^{I}_{3,\{2\},5}$, $W^{Q}_{3,\{2\},5}$
&$W^{I}_{3,\{3\},5}$, $W^{Q}_{3,\{3\},5}$
&$W^{I}_{3,\{4\},5}$, $W^{Q}_{3,\{4\},5}$
&
&$W^{I}_{3,\{6\},5}$, $W^{Q}_{3,\{6\},5}$\\
\hline
\multirow{3}{*}{\begin{tabular}[c]{@{}c@{}} $s=6$ \end{tabular}}
&\textcolor{red}{$W^{I}_{1,\{1\},6}$, $W^{Q}_{1,\{1\},6}$}
&$W^{I}_{1,\{2\},6}$, $W^{Q}_{1,\{2\},6}$
&$W^{I}_{1,\{3\},6}$, $W^{Q}_{1,\{3\},6}$
&$W^{I}_{1,\{4\},6}$, $W^{Q}_{1,\{4\},6}$
&$W^{I}_{1,\{5\},6}$, $W^{Q}_{1,\{5\},6}$
&\\
&\textcolor{red}{$W^{I}_{2,\{1\},6}$, $W^{Q}_{2,\{1\},6}$}
&$W^{I}_{2,\{2\},6}$, $W^{Q}_{2,\{2\},6}$
&$W^{I}_{2,\{3\},6}$, $W^{Q}_{2,\{3\},6}$
&$W^{I}_{2,\{5\},6}$, $W^{Q}_{2,\{5\},6}$
&$W^{I}_{2,\{6\},6}$, $W^{Q}_{2,\{6\},6}$
&\\
&\textcolor{red}{$W^{I}_{3,\{1\},6}$, $W^{Q}_{3,\{1\},6}$}
&$W^{I}_{3,\{2\},6}$, $W^{Q}_{3,\{2\},6}$
&$W^{I}_{3,\{3\},6}$, $W^{Q}_{3,\{3\},6}$
&$W^{I}_{3,\{4\},6}$, $W^{Q}_{3,\{4\},6}$
&$W^{I}_{3,\{5\},6}$, $W^{Q}_{3,\{5\},6}$
&\\
\hline
\end{tabular}
}}
\end{table*}

Let $r$ be a positive integer in the set $\{0,1,\ldots,K-1\}$, and let $|\mathcal{C}|$ be the cardinality of the set $\mathcal{C}$. In the proposed partition strategy, each file is partitioned into a total of $2(K-r)\binom{K}{r}$ segments, resulting in file segments denoted as $W^{A}_{f,\mathcal{R},s}$, through three steps. In the first step, each file $W_f$ is partitioned into $\binom{K}{r}$ equal-sized sub-files, resulting in the corresponding subscript label $\mathcal{R}\subset \mathcal{K}$ and $|\mathcal{R}|=r$.
In the second step, each such subfile is further partitioned into $(K-r)$ small subfiles of equal size, resulting in the corresponding subscript label $s\in \mathcal{K}\setminus \mathcal{R}$. In the third step, each small subfile is finally partitioned into two equal-sized segments, which are called the ``I channel" segment and ``Q channel" segments, resulting in the superscript label $A\in \{I,Q\}$. 

Consider our running example $(N,K)=(3,6)$ and $r=1$: here each file is partitioned into $2*(6-1)\binom{6}{1}=60$ segments, which are listed in Table \ref{table: file partition}.


\subsection{Prefetching}\label{prefetching}

In the prefetching phase, the I-channel and the Q-channel are prefetched in the same manner for each user $k\in \mathcal{K}$, independent of each other; for conciseness, we will use $A$ to denote either  $I$ or $Q$. 

\subsubsection{Uncoded content} The uncoded segments $W^{A}_{f,\mathcal{R},s}$ are placed into the cache memory of user $k$, for every $f\in\mathcal{F}$, every $\mathcal{R}$ such that $k\in \mathcal{R}$, every $s \notin \mathcal{R}$, and every $A\in \{I,Q\}$. A total of $m_1=2N\binom{ K-1}{r-1}( K-r)$ such segments are prefetched at user $k$. For our running example $(N,K)=(3,6)$ and $r=1$, the segments labeled red in Table \ref{table: file partition} are directly placed in the cache of user-$1$ in the uncoded form. 

\subsubsection{Coded content}

The coded content prefetched by user $k$ can be conveniently viewed as the parities of a product code \cite{Lin:01}: the segments with the same index $s=k$ (and the same channel) are placed in a matrix, where each row contains the segments of the same file index (i.e., the same $f$ index), and each column contains the segments of the same  $\mathcal{R}$ index.  For our running example, this matrix is shown in Table \ref{table: product code} in blue for user-1, which correspond to the elements in Table \ref{table: file partition}  that are also labeled blue. The coded symbols are then prefetched using this matrix as follows. 

\begin{table*}[htbp]
\centering
\caption{Designing encoded prefetching via a product code}
\label{table: product code}
\scalebox{0.6}{
\begin{tabular}{|c|c|c|c|c|c|c|}
\hline
 & \multirow{2}{*}{\begin{tabular}[c]{@{}c@{}} $\mathcal{R}=\{2\}$ \end{tabular}}  
 & \multirow{2}{*}{\begin{tabular}[c]{@{}c@{}} $\mathcal{R}=\{3\}$ \end{tabular}}
 & \multirow{2}{*}{\begin{tabular}[c]{@{}c@{}} $\mathcal{R}=\{4\}$ \end{tabular}}
 & \multirow{2}{*}{\begin{tabular}[c]{@{}c@{}} $\mathcal{R}=\{5\}$ \end{tabular}}
 & \multirow{2}{*}{\begin{tabular}[c]{@{}c@{}} $\mathcal{R}=\{6\}$ \end{tabular}}
 & Row \\
 & & & & & & parity check
 \\
\hline
\multirow{2}{*}{\begin{tabular}[c]{@{}c@{}} $f= 1$ \end{tabular}}
&\multirow{2}{*}{\begin{tabular}[c]{@{}c@{}}  \textcolor{blue}{$W^A_{1,\{2\},1}$} \end{tabular}}
&\multirow{2}{*}{\begin{tabular}[c]{@{}c@{}}  \textcolor{blue}{$W^A_{1,\{3\},1}$} \end{tabular}}
&\multirow{2}{*}{\begin{tabular}[c]{@{}c@{}}  \textcolor{blue}{$W^A_{1,\{4\},1}$} \end{tabular}}
&\multirow{2}{*}{\begin{tabular}[c]{@{}c@{}}  \textcolor{blue}{$W^A_{1,\{5\},1}$} \end{tabular}}
&\multirow{2}{*}{\begin{tabular}[c]{@{}c@{}}  \textcolor{blue}{$W^A_{1,\{6\},1}$} \end{tabular}}
&\dout{\textcolor{brown}{$Z^A_{1,\emptyset,1}=$}} 
\\ 
& & & & & &
\dout{\textcolor{brown}{$\bigoplus_{r\in [2:6]} W^A_{1,\{r\},1} $}}
\\
\hline
\multirow{2}{*}{\begin{tabular}[c]{@{}c@{}}  $f=2$ \end{tabular}}
&\multirow{2}{*}{\begin{tabular}[c]{@{}c@{}}  \textcolor{blue}{$W^A_{2,\{2\},1}$} \end{tabular}}
&\multirow{2}{*}{\begin{tabular}[c]{@{}c@{}}  \textcolor{blue}{$W^A_{2,\{3\},1}$} \end{tabular}}
&\multirow{2}{*}{\begin{tabular}[c]{@{}c@{}}  \textcolor{blue}{$W^A_{2,\{4\},1}$} \end{tabular}}
&\multirow{2}{*}{\begin{tabular}[c]{@{}c@{}}  \textcolor{blue}{$W^A_{2,\{5\},1}$} \end{tabular}}
&\multirow{2}{*}{\begin{tabular}[c]{@{}c@{}}  \textcolor{blue}{$W^A_{2,\{6\},1}$} \end{tabular}}
&\textcolor{brown}{$Z^A_{2,\emptyset,1}=$} \\ 
& & & & & &
\textcolor{brown}{$\bigoplus_{r\in [2:6]} W^A_{2,\{r\},1} $}\\
\hline
\multirow{2}{*}{\begin{tabular}[c]{@{}c@{}} $f = 3$ \end{tabular}}
&\multirow{2}{*}{\begin{tabular}[c]{@{}c@{}}  \textcolor{blue}{$W^A_{3,\{2\},1}$} \end{tabular}}
&\multirow{2}{*}{\begin{tabular}[c]{@{}c@{}}  \textcolor{blue}{$W^A_{3,\{3\},1}$} \end{tabular}}
&\multirow{2}{*}{\begin{tabular}[c]{@{}c@{}}  \textcolor{blue}{$W^A_{3,\{4\},1}$} \end{tabular}}
&\multirow{2}{*}{\begin{tabular}[c]{@{}c@{}}  \textcolor{blue}{$W^A_{3,\{5\},1}$} \end{tabular}}
&\multirow{2}{*}{\begin{tabular}[c]{@{}c@{}}  \textcolor{blue}{$W^A_{3,\{6\},1}$} \end{tabular}}
&\textcolor{brown}{$Z^A_{3,\emptyset,1}=$} \\ 
& & & & & &
\textcolor{brown}{$\bigoplus_{r\in [2:6]} W^A_{3,\{r\},1} $}\\
\hline
Column
& \textcolor{purple}{$Z^A_{\{2\},1}=$}
&\textcolor{purple}{$Z^A_{\{3\},1}=$}
&\textcolor{purple}{$Z^A_{\{4\},1}=$}
&\textcolor{purple}{$Z^A_{\{5\},1}=$}
&\textcolor{purple}{$Z^A_{\{6\},1}=$}
&\\
parity check
& \textcolor{purple}{$\bigoplus_{f\in [1:3]} W^A_{f,\{2\},1} $}
&\textcolor{purple}{$\bigoplus_{f\in [1:3]} W^A_{f,\{3\},1} $}
&\textcolor{purple}{$\bigoplus_{f\in [1:3]} W^A_{f,\{4\},1} $}
&\textcolor{purple}{$\bigoplus_{f\in [1:3]} W^A_{f,\{5\},1} $}
&\textcolor{purple}{$\bigoplus_{f\in [1:3]} W^A_{f,\{6\},1} $}
&\\
\hline
\end{tabular}
}
\end{table*}



\begin{itemize}
\item Column parity checks: These coded symbols $Z^{A}_{\mathcal{R},k}$'s are the parities for the column indexed by $\mathcal{R}$, i.e.,
\begin{align}
Z^{A}_{\mathcal{R},k}=\bigoplus_{f\in\mathcal{F}}W^{A}_{f,\mathcal{R},k}
\end{align}
for all possible set $\mathcal{R}$ satisfying $k\notin\mathcal{R}$. Clearly, there are $m_2=2\binom{ K-1}{r}$ such symbols, where the factor $2$ is due to the dual $I,Q$ channels.  

\item Row parity checks: 
For the row corresponding to file index $f$, there will be multiple row parities, and let us focus on one fixed $f$, i.e., one row. For each $\mathcal{R}^{-}\subseteq \mathcal{K} \backslash \{k\}$ where $|\mathcal{R}^{-}|=r-1$, a row parity $Z^{A}_{f,\mathcal{R}^{-},k}$ is given by 
\begin{align}
  Z^{A}_{f,\mathcal{R}^{-},k}=\bigoplus_{u\in\mathcal{ K\backslash}(\mathcal{R}^{-}\cup \{k\})} W^{A}_{f,\{u\}\cup\mathcal{R}^{-},k}.
\end{align}
Due to the inherent linear dependency in product codes, only some of these row parities need to prefetched. More precisely, let 
\begin{eqnarray*}
l_k & \triangleq & \begin{cases}
2 & \text{if $k=1$}\\
1 & \text{otherwise}\\
 \end{cases},
\end{eqnarray*}
then the prefetched row parities are $Z^{A}_{f,\mathcal{R}^{-},k}$ where $f\in\mathcal{F}\backslash \{1\}$ and $\mathcal{R}^{-}\subseteq \mathcal{K} \backslash \{k,l_k\}$ such that 
$\left|\mathcal{R}^{-}\right|=r-1$.
It is clear that the total number of prefetched row parities is $m_3=2(N-1)\binom{ K-2}{r-1}$. 


\end{itemize}


 The column and row parities at user-1 in our running example are given in Table \ref{table: product code}. The column parities are labeled purple, and the row parity bits are labeled brown, where $Z^A_{1,\emptyset,1}$ is not prefetched (strikethrough in the table). Since in this case $\mathcal{R}^{-}$ can only be the empty set, there is only one parity per row.

Collecting all the prefetched content, it is seen that the normalized cache size is
\begin{align}
M  =  \frac{m_1+m_2+m_3}{2K\binom{ K-1}{r}}=r\frac{N-1}{K-1}+\frac{r+1}{K}. 
\end{align}

The linear dependence among the coded parities is made more precise in the following lemma. 
\begin{lemma}\label{lemma: linear dependence}
Consider a fixed user $k$ and subset $\mathcal{R}^-_*\subset \mathcal{K}\setminus\{k\}$, where $l_k\in\mathcal{R}^-_*$ and $|\mathcal{R}^-_*|=r-1$, then for any $f\in \mathcal{F}$, 
\begin{align}
  Z^{A}_{f,\mathcal{R}^{-}_*,k} 
  =\bigoplus_{h\in\mathcal{K}\setminus(\{k\}\cup \mathcal{R}^-_*)} Z^{A}_{f, (\mathcal{R}^-_*\setminus\{l_k\})\cup\{h\},k}.\label{eq: linear dependence 1}
\end{align}
%
Moreover, we have 
\begin{align}
  Z^{A}_{1,\mathcal{R}^{-},k}=   \bigoplus_{\mathcal{R}: \mathcal{R}^{-}\subseteq \mathcal{R}, k\notin \mathcal{R}}
Z^{A}_{\mathcal{R},k}\oplus \bigoplus_{f\in \mathcal{F}\backslash\{1\}} Z^{A}_{f,\mathcal{R}^{-},k}.
  \label{eq: linear dependence 2}
\end{align}
\end{lemma}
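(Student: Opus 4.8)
The plan is to prove both identities by the same elementary device: expand every parity symbol back into its constituent file segments $W^{A}_{f,\mathcal{R},k}$ and then track, modulo two, how many times each segment appears on the two sides. Since all operations are XORs over $\mathrm{GF}(2)$, an equality holds if and only if every segment has matching multiplicity parity on both sides. At heart, both claims are just the standard row/column consistency of a two-dimensional product code, so no deep machinery is needed and the work is purely combinatorial bookkeeping.

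For \eqref{eq: linear dependence 1}, I would fix $f$ and write $\mathcal{S}=\mathcal{R}^{-}_*\setminus\{l_k\}$, so that $|\mathcal{S}|=r-2$ and $k,l_k\notin\mathcal{S}$. The left side expands, by the definition of the row parity, into $\bigoplus_{u} W^{A}_{f,\mathcal{S}\cup\{l_k,u\},k}$ with $u$ ranging over $\mathcal{K}\setminus(\mathcal{S}\cup\{k,l_k\})$. On the right side, each summand $Z^{A}_{f,\mathcal{S}\cup\{h\},k}$ expands into $\bigoplus_{u'} W^{A}_{f,\mathcal{S}\cup\{h,u'\},k}$. The key observation is then that a generic segment indexed by $\mathcal{S}\cup\{a,b\}$ is generated by the outer index $h=a$ (pairing with $u'=b$) and by $h=b$ (pairing with $u'=a$), but only when that outer index lies in the admissible range $\mathcal{K}\setminus(\{k\}\cup\mathcal{R}^{-}_*)$, i.e. only when it differs from $l_k$. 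Hence a segment with $l_k\notin\{a,b\}$ is produced exactly twice and cancels, whereas a segment with, say, $b=l_k$ is produced once (by $h=a$ only, since $h=l_k$ is barred) and survives. The surviving terms are exactly $\{W^{A}_{f,\mathcal{S}\cup\{l_k,a\},k}:a\in\mathcal{K}\setminus(\mathcal{S}\cup\{k,l_k\})\}$, reproducing the left side.

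For \eqref{eq: linear dependence 2} I would first collapse the column-parity sum. Writing each $\mathcal{R}\supseteq\mathcal{R}^{-}$ with $k\notin\mathcal{R}$ uniquely as $\mathcal{R}=\mathcal{R}^{-}\cup\{u\}$ with $u\in\mathcal{K}\setminus(\mathcal{R}^{-}\cup\{k\})$, and substituting $Z^{A}_{\mathcal{R},k}=\bigoplus_{f\in\mathcal{F}}W^{A}_{f,\mathcal{R},k}$, the double sum over $(f,u)$ regroups, for each fixed $f$, into precisely the row parity $Z^{A}_{f,\mathcal{R}^{-},k}$; thus $\bigoplus_{\mathcal{R}:\mathcal{R}^{-}\subseteq\mathcal{R},\,k\notin\mathcal{R}} Z^{A}_{\mathcal{R},k}=\bigoplus_{f\in\mathcal{F}} Z^{A}_{f,\mathcal{R}^{-},k}$. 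XORing this with the second term $\bigoplus_{f\in\mathcal{F}\setminus\{1\}} Z^{A}_{f,\mathcal{R}^{-},k}$ cancels every $f\neq 1$ contribution and leaves exactly $Z^{A}_{1,\mathcal{R}^{-},k}$, which is the claim.

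I expect the main obstacle to be confined to the counting step for \eqref{eq: linear dependence 1}: one must verify carefully that $l_k$ is itself an admissible inner index $u'$ in the relevant expansions (it is, since $l_k\neq k$ and $l_k\notin\mathcal{S}\cup\{h\}$ for any admissible $h$), so that the boundary segments containing $l_k$ are genuinely generated, and generated only once. The re-indexing for \eqref{eq: linear dependence 2} involves no cancellation subtlety and should be routine.
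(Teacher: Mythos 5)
Your proposal is correct and follows essentially the same route as the paper's proof: both identities are verified by expanding every parity into its constituent segments $W^{A}_{f,\mathcal{R},k}$ and exploiting GF(2) cancellation — your multiplicity-counting over pairs $\{a,b\}$ for \eqref{eq: linear dependence 1} is exactly the paper's step of adding the symmetric double sum $\bigoplus_{h_1\neq h_2}W^{A}_{f,\mathcal{S}^{-}\cup\{h_1,h_2\},k}=0$, and your collapse of the column-parity sum for \eqref{eq: linear dependence 2} is the paper's derivation read in reverse. The boundary check you flag (that $l_k$ is an admissible inner index) is the right point to verify, and it holds for the reason you give.
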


The proof of this lemma can be found in Appendix \ref{Appendix remark}. Essentially, this lemma states that all the row parities can be computed using only the prefetched parities. Consider again our running example. The parity $Z^{A}_{1,\emptyset,1}$ can be computed according to \eqref{eq: linear dependence 2} in Lemma \ref{lemma: linear dependence} as 
\begin{align*}
  Z^{A}_{1,\emptyset,1}
= & \bigoplus_{t\in \{2,3,4,5,6\}} Z^{A}_{\{t\},1} \oplus \bigoplus_{f\in \{2,3\}}  Z^{A}_{f,\emptyset,1}.
\end{align*}

\subsection{Pairwise Transformation for Delivery}\label{transformation}

In our previous work \cite{SGZT-19} where $N=K$, a delivery strategy was provided to cancel certain interference symbols in the coded prefetched symbols to improve the delivery efficiency. 
However, when $K \geq N$ and some file $W_n$ is requested by an even number of users, this interference cancellation strategy would not succeed. In order to create the same cancellation effect in such cases, we have to create such an opportunity by increasing the dimension of the message space. This process is reminiscent of the interference alignment technique \cite{Jafar-08} in channel coding, where vector space techniques are essential.
The I/Q channel structure and the corresponding pairwise transformation are therefore introduced for this purpose, which we provide next. 


The delivery process will go through all $s\in\mathcal{K}$, and in the following we shall discuss a fixed $s$. For a demand $\bm{d}$, we denote $K_{f}$ as the number of users requesting file $W_{f}$. Correspondingly, denote the set of indices of these users as $\mathcal{K}_f$. For any $s\in\mathcal{K}$, denote its complement set as 
\begin{align}
\mathcal{C}_s \triangleq \mathcal{K}\backslash \{s\}. \label{eqn:U}
\end{align}
Among the users $\mathcal{C}_s$ that request file $W_f$, let the lowest indexed user be the \textit{leader} in $\mathcal{K}_f$, denoted as $u_{f}$.
The transformed segment pair $(W_{ d(t),\mathcal{R},s}^{(T),I},W_{ d(t),\mathcal{R},s}^{(T),Q})$ for $t,s\in\mathcal{K}$ is given by
\begin{align} \label{eq:defofTForW}
    \begin{bmatrix}W_{ d(t),\mathcal{R},s}^{(T),I}\\W_{ d(t),\mathcal{R},s}^{(T),Q} \end{bmatrix} =\mathbf{T}_{t,s}  \begin{bmatrix}W_{ d(t),\mathcal{R},s}^{I}\\W_{ d(t),\mathcal{R},s}^{Q} \end{bmatrix},
\end{align}
where $\mathbf{T}_{t,s}$ is given next. 
If $ d(t)= d(s)$, then define
\begin{eqnarray}
\mathbf{T}_{t,s} \triangleq \begin{cases}
\begin{bmatrix}1 & 0\\0 & 1 \end{bmatrix} 
 \vspace{0.3cm}
 & \text{if \ensuremath{K_{d(t)}\text{ is odd}}}\\
 \vspace{0.3cm}
 \begin{bmatrix}1 & 1\\1 & 0 \end{bmatrix}  & \text{if \ensuremath{K_{d(t)}\text{ is even and $t\neq s$}}}.\\
  \begin{bmatrix}0 & 1\\1 & 1 \end{bmatrix}  & \text{if \ensuremath{K_{d(t)}\text{ is even and $t= s$}}}
 \end{cases}\label{eq:1}
\end{eqnarray}
On the other hand,  if $ d(t)\neq d(s)$, define
 \begin{equation}
 \mathbf{T}_{t,s}\triangleq\begin{cases}
  \vspace{0.3cm}
 \begin{bmatrix}1 & 0\\0 & 1 \end{bmatrix} & \text{if \ensuremath{\ensuremath{\mathcal{K}_{ d(t)}}} is odd}\\
  \vspace{0.3cm}
 \begin{bmatrix}1 & 1\\1 & 0 \end{bmatrix} & \text{if \ensuremath{\mathcal{K}_{ d(t)}} is even and \ensuremath{t\neq u_{d(t)}}}\\
 \begin{bmatrix}0 & 1\\1 & 1 \end{bmatrix} & \text{if \ensuremath{\mathcal{K}_{ d(t)}} is even and \ensuremath{t=u_{d(t)}}}
 \end{cases}.\label{eq:2}
 \end{equation}
In other words, when file $W_{f}$ is requested by an odd number of users, the transformed pair  $(W_{f,\mathcal{R},s}^{(T),I},W_{f,\mathcal{R},s}^{(T),Q})$ are identical to the orginal segment pair $(W_{f,\mathcal{R},s}^{I},W_{f,\mathcal{R},s}^{Q})$; when file $W_{f}$ is requested by an even number of users, an invertible transformation is applied.

In our running example, when the demand vector is $(1,1,1,1,2,3)$, consider the case $t=2$, $s=1$, and $\mathcal{R}=\{3\}$. Since $d(t)=d(s)$ and $K_1=4$, the \textcolor{black}{transformed segments} are given as 
\begin{align*}
  &W_{d(t),\mathcal{R},s}^{(T),I}|_{t=2,s=1,\mathcal{R}=\{3\}}= W_{1,\{3\},1}^I\oplus W_{1,\{3\},1}^Q,\\
  &W_{d(t),\mathcal{R},s}^{(T),Q}|_{t=2,s=1,\mathcal{R}=\{3\}}= W_{1,\{3\},1}^I,
\end{align*}
by \eqref{eq:1}. Next consider another case where $t=2$, $s=5$, and $\mathcal{R}=\{3\}$. Since $d(s)\neq d(t)$, $K_{d(t)}=K_1=4$, and the leader of $\mathcal{K}_1$ is $u_{d(t)}=1$, we have $t\neq u_{d(t)}$. Hence, 
\eqref{eq:2} gives
\begin{align*}
  &W_{d(t),\mathcal{R},s}^{(T),I}|_{t=2,s=5,\mathcal{R}=\{3\}}= W_{1,\{3\},5}^I\oplus W_{1,\{3\},5}^Q,\\
  &W_{d(t),\mathcal{R},s}^{(T),Q}|_{t=2,s=5,\mathcal{R}=\{3\}}= W_{1,\{3\},5}^I.
\end{align*}
Table \ref{table: intermediate} shows all the transformed segments for $s=1$. 
\begin{table}[tb]
\centering
\caption{Transformed segments for $s=1$ \label{table: intermediate}}
\scalebox{0.9}{
\begin{tabular}{|c|c|c|}
\hline
$t\in \mathcal{K}\backslash \{1\}$  &  $W_{d(t),\mathcal{R},1}^{(T),I}$    &   $W_{d(t),\mathcal{R},1}^{(T),Q}$\\ \hline
2 & $W_{1,\mathcal{R},1}^{I}\oplus W_{1,\mathcal{R},1}^{Q}$& $W_{1,\mathcal{R},1}^{I}$ \\ \hline
3 & $W_{1,\mathcal{R},1}^{I}\oplus W_{1,\mathcal{R},1}^{Q}$& $W_{1,\mathcal{R},1}^{I}$  \\ \hline  
4 & $W_{1,\mathcal{R},1}^{I}\oplus W_{1,\mathcal{R},1}^{Q}$& $W_{1,\mathcal{R},1}^{I}$  \\ \hline
5 & $W_{2,\mathcal{R},1}^{I}$ & $W_{2,\mathcal{R},1}^{Q}$ \\ \hline
6 & $W_{3,\mathcal{R},1}^{I}$ & $W_{3,\mathcal{R},1}^{Q}$  \\ \hline
\end{tabular}}
\end{table}

\subsection{Delivery Strategy}
\label{delivery}
The delivery strategy has certain similarity to that in \cite{MAN-14} and \cite{Yu-18}. For an arbitrary $s\in \mathcal{K}$ and set $\mathcal{R}^{+} \subseteq \mathcal{K}\backslash \{s\}$ with $r+1$ users, i.e., $\left|\mathcal{R}^{+}\right|=r+1$, the message $Y^{A}_{\mathcal{R}^{+},s}$ is defined as 
\begin{eqnarray}\label{eq:delivery}
Y^{A}_{\mathcal{R}^{+},s} & = & \bigoplus_{t\in\mathcal{R}^{+}}W^{(T),A}_{ d(t),\mathcal{R}^{+}\backslash t,s} \label{allY}
\end{eqnarray}
for both $A=I$ and $Q$. In our running example with demand $\bm{d} = (1,1,1,1,2,3)$, these coded symbols for $s=1$ are shown in Table \ref{table:IQ2}. 


Due to the inherent linear dependence,  not all of the coded symbols given above are transmitted. Recall the definition of $\mathcal{C}_s$ in (\ref{eqn:U}), and we further define the leader set to be 
\begin{align*}
    \mathcal{L}_s = \{u_{d(t)}: t \in \mathcal{C}_s\}.
\end{align*}
If some $\mathcal{R}^+$ satisfies that $\mathcal{R}^+ \subseteq \mathcal{C}_s$ and $\mathcal{R}^+ \cap \mathcal{L}_s = \emptyset$,
then the corresponding encoded segments $Y^{A}_{\mathcal{R}^{+},s}$ are not transmitted for both $A=I$ and $Q$. This is similar to the improved delivery strategy in \cite{Yu-18}, and the following lemma makes this linear dependence among $Y^{A}_{\mathcal{R}^{+},s}$'s more precise. 
\begin{lemma}\label{lemma:redundancy}
For given $s\in \mathcal{K}$ and subset $\mathcal{B}$ such that $\mathcal{L}_s\subset \mathcal{B} \subseteq \mathcal{C}_s$ and $|\mathcal{B}|=|\mathcal{L}_s|+r+1$, let $\mathcal{V}_F$ be a superset of all subsets $\mathcal{V}$ of $\mathcal{B}$ such that each file is requested by exactly one user in $\mathcal{V}$. Then
\begin{align*}
\bigoplus_{\mathcal{V} \in \mathcal{V}_F} Y_{\mathcal{B} \backslash \mathcal{V},s}^I = 0, \quad\bigoplus_{\mathcal{V} \in \mathcal{V}_F} Y_{\mathcal{B} \backslash \mathcal{V},s}^Q = 0.
\end{align*}
\end{lemma}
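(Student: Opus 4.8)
The plan is to prove both identities simultaneously by expanding each $Y^{A}_{\mathcal{B}\setminus\mathcal{V},s}$ through its definition \eqref{allY}, rewriting every transformed segment via the matrix $\mathbf{T}_{t,s}$, and then checking that, after collecting terms over $\mathbb{F}_2$, the coefficient of each original segment $W^{I}_{f,\mathcal{S},s}$ and $W^{Q}_{f,\mathcal{S},s}$ vanishes. Writing $\mathcal{R}^{+}=\mathcal{B}\setminus\mathcal{V}$, the first step is
\begin{align*}
\bigoplus_{\mathcal{V}\in\mathcal{V}_F}Y^{A}_{\mathcal{B}\setminus\mathcal{V},s}=\bigoplus_{\mathcal{V}\in\mathcal{V}_F}\;\bigoplus_{t\in\mathcal{B}\setminus\mathcal{V}}W^{(T),A}_{d(t),(\mathcal{B}\setminus\mathcal{V})\setminus\{t\},s},
\end{align*}
after which I would reorganize the double sum by the index pair $(f,\mathcal{S})$ of the underlying segment, where $f=d(t)$ and $\mathcal{S}=(\mathcal{B}\setminus\mathcal{V})\setminus\{t\}$ with $|\mathcal{S}|=r$. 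Since $|\mathcal{B}|=|\mathcal{L}_s|+r+1$ and $\mathcal{L}_s\subset\mathcal{B}\subseteq\mathcal{C}_s$ forces every $\mathcal{V}\in\mathcal{V}_F$ to have exactly $|\mathcal{L}_s|$ users, each $\mathcal{R}^{+}$ has exactly $r+1$ users, so the bookkeeping is consistent.

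The combinatorial core is the observation that a fixed $(f,\mathcal{S})$ is produced by a pair $(\mathcal{V},t)$ precisely when $\mathcal{B}\setminus\mathcal{V}=\mathcal{S}\cup\{t\}$ with $d(t)=f$; because $\mathcal{V}=\mathcal{B}\setminus(\mathcal{S}\cup\{t\})$ must be a transversal requesting each file exactly once, this can happen only when $\mathcal{B}\setminus\mathcal{S}$ contains exactly two users requesting $f$ and exactly one user requesting every other file present, while all $\mathcal{S}$ that do not induce such a unique ``doubled'' file contribute nothing. For a contributing $\mathcal{S}$ there are then exactly two admissible choices of $t$, namely the two users $a,b\in\mathcal{B}\setminus\mathcal{S}$ with $d(a)=d(b)=f$, obtained by swapping which of them lies in $\mathcal{R}^{+}$ and which completes the transversal. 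Hence the coefficient of $(W^{I}_{f,\mathcal{S},s},W^{Q}_{f,\mathcal{S},s})$ equals the $\mathbb{F}_2$-sum of the relevant row of $\mathbf{T}_{a,s}$ and of $\mathbf{T}_{b,s}$; since the $I$-equation reads off the first row and the $Q$-equation the second, both identities reduce to the single matrix claim $\mathbf{T}_{a,s}=\mathbf{T}_{b,s}$ for every such pair $a,b$.

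The hard part will be exactly this transform-matching step, and it is where the design of the pairwise transformation must be invoked. When $f$ is requested by an odd number of users both matrices are the identity and the claim is immediate. The delicate case is an even $f$, where \eqref{eq:1}--\eqref{eq:2} assign the distinguished matrix $\begin{bmatrix}0&1\\1&1\end{bmatrix}$ to a single special user and $\begin{bmatrix}1&1\\1&0\end{bmatrix}$ to all other users requesting $f$; the claim then amounts to showing that neither $a$ nor $b$ is that special user, so both receive $\begin{bmatrix}1&1\\1&0\end{bmatrix}$. For $f=d(s)$ this is automatic, since the special slot is the index $s$ itself and $s\notin\mathcal{C}_s\supseteq\mathcal{B}$, so it can never be one of $a,b$. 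The genuinely difficult subcase, which I expect to be the main obstacle, is an even $f\neq d(s)$: here the special user is the leader $u_{f}\in\mathcal{L}_s\subseteq\mathcal{B}$, and one must argue that the precise way the transversals in $\mathcal{V}_F$ are formed inside $\mathcal{B}$ keeps $u_{f}$ out of the doubled pair $\{a,b\}$. Establishing this structural exclusion—tracking how $\mathcal{L}_s$ sits inside $\mathcal{B}$ and how leaders are distributed between the $\mathcal{R}^{+}$ side and the transversal side—will require the most care; once it is in place, the $\mathbb{F}_2$ cancellation is uniform and identical for the $I$ and $Q$ equations.
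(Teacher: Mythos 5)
Your reduction is sound, and it is exactly the argument the paper has in mind (its own ``proof'' is a one-sentence deferral to \cite[Lemma 1]{Yu-18}): expand \eqref{allY}, re-index by the underlying segment label $(f,\mathcal{S})$, and observe that each label receives contributions from exactly zero or two pairs $(\mathcal{V},t)$ --- the two admissible $t$ being the two users $a,b\in\mathcal{B}\setminus\mathcal{S}$ requesting $f$ --- so the lemma becomes equivalent to $\mathbf{T}_{a,s}=\mathbf{T}_{b,s}$ for every doubled pair. Your handling of odd-demanded files and of $f=d(s)$ (where the exceptional matrix sits at $t=s\notin\mathcal{B}$) is also correct. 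The problem is the step you postpone: for $f\neq d(s)$ with $K_f$ even, the hoped-for ``structural exclusion'' of the leader $u_f$ from the doubled pair is not merely delicate, it is false. Since $u_f\in\mathcal{L}_s\subseteq\mathcal{B}$, whenever $\mathcal{B}$ also contains a second requester $b$ of $f$ (necessarily $b\notin\mathcal{L}_s$), you may take $\mathcal{U}=\{u_f,b\}\cup(\mathcal{L}_s\setminus\{u_f\})$ and $\mathcal{S}=\mathcal{B}\setminus\mathcal{U}$, which has $|\mathcal{S}|=r$; then $\mathcal{U}\setminus\{u_f\}$ and $\mathcal{U}\setminus\{b\}$ both lie in $\mathcal{V}_F$, the doubled pair is $\{u_f,b\}$, and by \eqref{eq:2} their matrices differ, leaving $W^{I}_{f,\mathcal{S},s}$ (resp.\ $W^{Q}_{f,\mathcal{S},s}$) uncancelled.

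Because your reduction is an equivalence, this is not a gap you can close with cleverer bookkeeping: the asserted identity itself fails in such configurations. Concretely, take $(N,K)=(2,5)$, $\bm{d}=(1,1,1,2,2)$, $s=1$, $r=1$, so $\mathcal{L}_1=\{2,4\}$, $\mathcal{B}=\mathcal{C}_1=\{2,3,4,5\}$ is a valid choice, and $\mathcal{V}_F=\{\{2,4\},\{2,5\},\{3,4\},\{3,5\}\}$. Here $K_1=3$ is odd and $K_2=2$ is even with $u_2=4$, so $\mathbf{T}_{2,1}=\mathbf{T}_{3,1}$ is the identity, $\mathbf{T}_{4,1}=\bigl[\begin{smallmatrix}0&1\\1&1\end{smallmatrix}\bigr]$, $\mathbf{T}_{5,1}=\bigl[\begin{smallmatrix}1&1\\1&0\end{smallmatrix}\bigr]$, and direct evaluation of the four terms gives
\begin{align*}
\bigoplus_{\mathcal{V}\in\mathcal{V}_F}Y^{I}_{\mathcal{B}\backslash\mathcal{V},1}=W^{I}_{2,\{2\},1}\oplus W^{I}_{2,\{3\},1}\neq 0,
\qquad
\bigoplus_{\mathcal{V}\in\mathcal{V}_F}Y^{Q}_{\mathcal{B}\backslash\mathcal{V},1}=W^{Q}_{2,\{2\},1}\oplus W^{Q}_{2,\{3\},1}\neq 0,
\end{align*}
the residuals coming precisely from the two doubled pairs $\{4,5\}$ (with $\mathcal{S}=\{2\}$ and $\mathcal{S}=\{3\}$). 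The obstruction is structural: Lemma \ref{z2} forces exactly one requester of each even-demanded file to carry $\bigl[\begin{smallmatrix}0&1\\1&1\end{smallmatrix}\bigr]$, while the cancellation here needs all requesters of $f$ inside $\mathcal{B}$ to carry the same matrix; for $f\neq d(s)$ every requester of $f$ lies in $\mathcal{C}_s$ and the exceptional one is by definition the leader, which always lies in $\mathcal{B}$, so the two requirements collide. This is exactly what the paper's appeal to \cite{Yu-18} glosses over --- there the segments are untransformed, so the two members of any doubled pair contribute identical terms. As planned, your proof cannot be completed; a correct treatment must either change the transformation/leader convention or weaken the statement to identify the residual (in examples the residual consists of segments cached by the users who need the skipped symbols, which is presumably why decoding can still succeed, but that is a different and longer argument).
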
 
Lemma \ref{lemma:redundancy} can be proved in an almost identical manner as that in \cite[Lemma 1]{Yu-18}. The only difference is that in our setting, the dependence is considered for each fixed index $s\in \mathcal{K}$, and we operate on the transformed segments instead of the original segments. Therefore, we omit the details of the proof, and readers can refer to \cite{Yu-18} directly.

By Lemma \ref{lemma:redundancy}, for any non-leader users set $\mathcal{R}^+$ such that $\mathcal{R}^+ \subseteq \mathcal{C}_s$ and $\mathcal{R}^+ \cap \mathcal{L}_s = \emptyset$, $Y_{\mathcal{R}^{+},s}^I$ and $Y_{\mathcal{R}^{+},s}^Q$ can be computed using the transmitted symbols
\begin{align*}
    Y_{\mathcal{R}^{+},s}^I = \bigoplus_{\mathcal{V} \in \mathcal{V}_F \backslash \{\mathcal{L}_s\}} Y_{\mathcal{B} \backslash \mathcal{V},s}^I \\
    Y_{\mathcal{R}^{+},s}^Q = \bigoplus_{\mathcal{V} \in \mathcal{V}_F \backslash \{\mathcal{L}_s\}} Y_{\mathcal{B}\backslash \mathcal{V},s}^Q.
\end{align*}
We shall directly assume that all the symbols in (\ref{allY}) are available at users from here on.

 In our running example, for $s=1$ we have $\mathcal{C}_1=\{2,3,4,5,6\}$ and $\mathcal{L}_1=\{2,5,6\}$. If $\mathcal{R}^+=\{3,4\}$, obviously $\mathcal{R}^+\cap \mathcal{L}_s= \emptyset$. Hence the messages $(Y^I_{\{3,4\},1},Y^Q_{\{3,4\},1})$ are not transmitted, which is indicated by the strikethrough in Table \ref{table:IQ2}. These two symbols can be computed as
\begin{align*}
    Y^{A}_{\{3,4\},1}=Y^{A}_{\{2,3\},1}\oplus Y^{A}_{\{2,4\},1},\quad A\in \{I,Q\}.
\end{align*}

\begin{table}[tb]
\centering
\caption{Delivery for Demand $\bm{d} = (1,1,1,1,2,3)$, $s=1$}
\label{table:IQ2}
\scalebox{0.75}{
\begin{tabular}{|c|c|c|}
\hline
& I-channel & Q-channel                                                                          \\ \hline
$Y^{A}_{\{2,3\},1}$ & $W_{1,\{3\},1}^I\oplus W_{1,\{3\},1}^Q \oplus W_{1,\{2\},1}^I\oplus W_{1,\{2\},1}^Q$ &  $W_{1,\{3\},1}^I \oplus W_{1,\{2\},1}^I$                        \\ \hline
$Y^{A}_{\{2,4\},1}$ & $W_{1,\{4\},1}^I\oplus W_{1,\{4\},1}^Q \oplus W_{1,\{2\},1}^I\oplus W_{1,\{2\},1}^Q$ &  $W_{1,\{4\},1}^I \oplus W_{1,\{2\},1}^I$     \\ \hline
$Y^{A}_{\{2,5\},1}$ & $W_{1,\{5\},1}^I\oplus W_{1,\{5\},1}^Q \oplus W_{2,\{2\},1}^I$             &  $W_{1,\{5\},1}^I \oplus W_{2,\{2\},1}^Q$  \\ \hline
$Y^{A}_{\{2,6\},1}$ & $W_{1,\{6\},1}^I\oplus W_{1,\{6\},1}^Q \oplus W_{3,\{2\},1}^I$             &  $W_{1,\{6\},1}^I \oplus W_{3,\{2\},1}^Q$                       \\ \hline
*$Y^{A}_{\{3,4\},1}$ &  \dout{$W_{1,\{4\},1}^I\oplus W_{1,\{4\},1}^Q \oplus W_{1,\{3\},1}^I\oplus W_{1,\{3\},1}^Q$}             &  \dout{$W_{1,\{4\},1}^I \oplus W_{1,\{3\},1}^I$}     \\ \hline
$Y^{A}_{\{3,5\},1}$ & $W_{1,\{5\},1}^I\oplus W_{1,\{5\},1}^Q \oplus W_{2,\{3\},1}^I$             &  $W_{1,\{5\},1}^I \oplus W_{2,\{3\},1}^Q$     \\ \hline
$Y^{A}_{\{3,6\},1}$ & $W_{1,\{6\},1}^I\oplus W_{1,\{6\},1}^Q \oplus W_{3,\{3\},1}^I$             &  $W_{1,\{6\},1}^I \oplus W_{3,\{3\},1}^Q$                         \\ \hline
$Y^{A}_{\{4,5\},1}$ & $W_{1,\{5\},1}^I\oplus W_{1,\{5\},1}^Q \oplus W_{2,\{4\},1}^I$             &  $W_{1,\{5\},1}^I \oplus W_{2,\{4\},1}^Q$     \\ \hline
$Y^{A}_{\{4,6\},1}$ & $W_{1,\{6\},1}^I\oplus W_{1,\{6\},1}^Q \oplus W_{3,\{4\},1}^I$             &  $W_{1,\{6\},1}^I \oplus W_{3,\{4\},1}^Q$     \\ \hline
$Y^{A}_{\{5,6\},1}$ & $W_{2,\{6\},1}^I \oplus W_{3,\{5\},1}^I$             &  $W_{2,\{6\},1}^Q \oplus W_{3,\{5\},1}^Q$   \\ \hline
\end{tabular}}
\end{table}

For each $s \in \mathcal{K}, |\mathcal{C}_s| = K-1 $ and the number of leaders among users $\mathcal{C}_s$ is exactly $|\mathcal{L}_s|=N-1$ if user $s$ is the only user that request file $W_{d(s)}$ and $|\mathcal{L}_s|=N$ other wise. The number of skipped $Y_{\mathcal{R}^{+},s}^A$ symbols is $\binom{K-1-|\mathcal{L}_s|}{r+1}$. Hence, the communication rate $R$ is 
\begin{eqnarray}
R = \frac{T}{2K\binom{ K-1}{r}},
\end{eqnarray}
where 
\begin{align*}
T&=   2K\binom{K-1}{r+1} - 2\sum_{s \in \mathcal{K}}\binom{K-1-|\mathcal{L}_s|}{r+1}\\
&= 2K\binom{K-1}{r+1}-2(K-p_{\vec{\bm{e}}(\bm{d})})\binom{K-1-N}{r+1}\\
&-2p_{\vec{\bm{e}}(\bm{d})}\binom{K-1-(N-1)}{r+1}\\
&= 2K\binom{K-1}{r+1} -2K\binom{ K-1}{r} S(\vec{\bm{e}}(\bm{d}),r)
\end{align*}
matching the rate expression in Theorem \ref{thm1 iq channel}.

\subsection{Decoding}\label{decoding theory}
User $k$ needs $W_{ d(k),\mathcal{R},s}^A$ for all $s\in  \mathcal{K} $, $\mathcal{R} \subseteq  \mathcal{K} \backslash \{s\}, \left|\mathcal{R}\right|=r$, and $A \in \{I, Q\}$. Among all these segments, $W_{ d(k),\mathcal{R},s}^A$, where $k\in \mathcal{R}$, can be directly obtained from user-$k$'s cache in an uncoded form. The remaining required segments can be categorized into two classes:
\begin{enumerate} 
\item $(W_{d(k),\mathcal{R},s}^{I},W_{d(k),\mathcal{R},s}^{Q})$ pairs where $k\notin \mathcal{R}$ and $k\neq s$,
\item $(W_{d(k),\mathcal{R},k}^{I},W_{d(k),\mathcal{R},k}^{Q})$ pairs where $k\notin \mathcal{R}$ (i.e., $k=s$ in  $W_{d(k),\mathcal{R},s}$),
\end{enumerate}
which are mutually exclusive and will be decoded as follows.
 
\subsubsection{Decoding by individual symbol elimination}
We decode the first class of segments listed above using the same method as that in \cite{MAN-14}, but in the transformed domain. 
Notice that 
\begin{align}
&W_{ d(k),\mathcal{R},s}^{(T),A}\nonumber\\
  = & W_{ d(k),\mathcal{R},s}^{(T),A}
  \oplus\bigoplus_{j\in\mathcal{R}}W^{(T),A}_{ d(j), \mathcal{R}\cup \left\{k\right\} \backslash \{j\},s}
  \oplus\bigoplus_{i\in\mathcal{R}}W^{(T),A}_{ d(i), \mathcal{R}\cup\left\{ k\right\} \backslash \{i\},s}\nonumber\\
  = & \bigoplus_{j\in\mathcal{R}\cup k}W^{(T),A}_{ d(j), \mathcal{R}\cup\left\{ k\right\} \backslash \{j\},s}\oplus\bigoplus_{i\in\mathcal{R}}W^{(T),A}_{ d(i), \mathcal{R}\cup\left\{ k\right\} \backslash \{i\},s}\nonumber\\
  = & Y^{A}_{\mathcal{R}\cup\left\{ k\right\},s}\oplus\bigoplus_{i\in\mathcal{R}}W^{(T),A}_{ d(i), \mathcal{R}\cup\left\{ k\right\} \backslash \{i\},s},\quad A\in \{I,Q\}.\label{decode by cancellation}
\end{align}
Thus these transformed segments $W_{d(k),\mathcal{R},s}^{(T),A}$ can be decoded by using one delivered symbol $Y^{A}_{\mathcal{R}\cup\left\{ k\right\},s}$ and $r$ prefetched symbol $W^{(T),A}_{ d(i), \mathcal{R}\cup\left\{ k\right\} \backslash \{i\},s}$ (after applying the pairwise transformation). Since the pairwise transformation is invertible, user $k$ can indeed recover the first class of segments $(W_{ d(k),\mathcal{R},s}^{I},W_{ d(k),\mathcal{R},s}^{Q})$. 

In our running example, consider recovering the segments $(W_{ 1,\{2\},3}^I,W_{ 1,\{2\},3}^Q)$ at user 1, which are not prefetched. By substituting $\mathcal{R}=\{2\}$, $s=3$ and $k=1$ into \eqref{decode by cancellation}, we have
\begin{align*}
W_{ 1,\{2\},3}^{(T),I}=Y^I_{\{1,2\},3}\oplus W_{ 2,\{1\},3}^{(T),I},
\end{align*}
and
\begin{align*}
W_{ 1,\{2\},3}^{(T),Q}=Y^Q_{\{1,2\},3}\oplus W_{ 2,\{1\},3}^{(T),Q}.
\end{align*}
These two equations hold trivially since 
\begin{align*}
    Y^A_{\{1,2\},3}=W^{(T),A}_{1,\{2\},3}\oplus W^{(T),A}_{1,\{1\},3}
\end{align*}
for $A=\{I,Q\}$. Therefore, $(W_{ 1,\{2\},3}^{I},W_{ 1,\{2\},3}^{Q})$ can be decoded by inverting pairwise transformation.


\subsubsection{Decoding by interference alignment}

The second class of file segments to be recovered is those $(W_{d(k),\mathcal{R},k}^I,W_{d(k),\mathcal{R},k}^Q)$ at user $k$ where $\mathcal{R}\subseteq \mathcal{K}\backslash \{k\}$. The following lemma provides the key instrument for the decoding process, again in the transformed domain. 
\begin{lemma}\label{decoding intermediate}
For user $k$, $W_{ d(k),\mathcal{R},k}^{(T),A}$ can be computed as 
\begin{align}
W_{ d(k),\mathcal{R},k}^{(T),A}&=Z^{A}_{\mathcal{R},k}\oplus \bigoplus_{t\in\mathcal{R} }  Z^{(T),A}_{d(t),\mathcal{R}\backslash\{t\},k}\nonumber\\
&\oplus 
\bigoplus_{\mathcal{R}^{+}: \mathcal{R}\subset\mathcal{R}^{+},k\notin\mathcal{R}^{+}} Y^{A}_{\mathcal{R}^{+},k}  \label{decoding result}
\end{align}
for $A\in \{I,Q\}$ and $\mathcal{R}\subseteq  \mathcal{K}\backslash \{k\}$, where $Z^{(T),A}_{d(t),\mathcal{R}\backslash\{t\},k}$ is defined as 
\begin{align}
    \begin{bmatrix}Z^{(T),I}_{d(t),\mathcal{R}\backslash\{t\},k}\\Z^{(T),Q}_{d(t),\mathcal{R}\backslash\{t\},k}\end{bmatrix} \triangleq \mathbf{T}_{t,k}  \begin{bmatrix}Z^{I}_{d(t),\mathcal{R}\backslash\{t\},k}\\Z^{Q}_{d(t),\mathcal{R}\backslash\{t\},k}\end{bmatrix},
\end{align}
in which $\mathbf{T}_{t,k}$ is defined in \eqref{eq:1} and \eqref{eq:2}.
\end{lemma}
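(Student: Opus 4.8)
The plan is to treat \eqref{decoding result} as a linear identity over $\mathrm{GF}(2)$ and prove it by a bookkeeping argument that tracks, for every original segment pair $\bm{v}_{f,\mathcal{S}}\triangleq(W^{I}_{f,\mathcal{S},k},W^{Q}_{f,\mathcal{S},k})^{\mathsf{T}}$, its net $2\times 2$ matrix coefficient in the right-hand side. Each of the three groups of terms is a $\mathrm{GF}(2)$-combination of such pairs: the column parity $Z^{A}_{\mathcal{R},k}=\bigoplus_f W^{A}_{f,\mathcal{R},k}$ contributes the identity matrix $I$ to every $\bm{v}_{f,\mathcal{R}}$; each transformed row parity $Z^{(T),A}_{d(t),\mathcal{R}\setminus\{t\},k}$ contributes $\mathbf{T}_{t,k}$ to every pair it contains; and each delivery symbol $Y^{A}_{\mathcal{R}^{+},k}=\bigoplus_{t'\in\mathcal{R}^{+}}W^{(T),A}_{d(t'),\mathcal{R}^{+}\setminus t',k}$ contributes $\mathbf{T}_{t',k}$ through its $t'$-th term. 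Writing the left-hand side as $W^{(T),A}_{d(k),\mathcal{R},k}=[\mathbf{T}_{k,k}\bm{v}_{d(k),\mathcal{R}}]_{A}$, the goal becomes to show that the net coefficient of $\bm{v}_{d(k),\mathcal{R}}$ equals $\mathbf{T}_{k,k}$ while every other pair has net coefficient $0$.

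First I would classify the index sets that can appear. Expanding the definitions, the column parity uses index $\mathcal{R}$; a row parity $Z^{A}_{d(t),\mathcal{R}\setminus\{t\},k}$ uses indices $\{u\}\cup(\mathcal{R}\setminus\{t\})$ with $u\notin(\mathcal{R}\setminus\{t\})\cup\{k\}$; and the delivery sum runs over $\mathcal{R}^{+}=\mathcal{R}\cup\{u\}$, $u\notin\mathcal{R}\cup\{k\}$. Hence every index set is either $\mathcal{R}$ itself or $\mathcal{S}=(\mathcal{R}\setminus\{t_0\})\cup\{u_0\}$ for a unique pair $t_0\in\mathcal{R}$, $u_0\notin\mathcal{R}\cup\{k\}$. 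For such a swapped $\mathcal{S}$ the bookkeeping is easy: the only contribution to $\bm{v}_{f,\mathcal{S}}$ from the row parities is the $u=u_0$ term of $Z^{(T),A}_{d(t_0),\mathcal{R}\setminus\{t_0\},k}$ (forcing $f=d(t_0)$, coefficient $\mathbf{T}_{t_0,k}$), and the only contribution from delivery is the $t'=t_0$ term of $Y^{A}_{\mathcal{R}\cup\{u_0\},k}$ (again $f=d(t_0)$, coefficient $\mathbf{T}_{t_0,k}$); these cancel, so all swapped indices vanish.

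The substance of the proof is the index-$\mathcal{R}$ terms. For a fixed file $f$, the coefficient of $\bm{v}_{f,\mathcal{R}}$ is $I+\sum_{t\in\mathcal{K}_{f}\setminus\{k\}}\mathbf{T}_{t,k}$, where the sum collects the $u=t$ terms of the row parities together with the $t'=u$ terms of the delivery symbols, which together range over exactly the users other than $k$ that request $f$. I would evaluate this matrix sum using the two algebraic facts over $\mathrm{GF}(2)$ that $E+E'=I$ and $I+E=E'$, where $E\triangleq\begin{bmatrix}1&1\\1&0\end{bmatrix}$ and $E'\triangleq\begin{bmatrix}0&1\\1&1\end{bmatrix}$ are the two even-case transforms, together with the structural fact that among the users requesting an even-demand file exactly one is ``special'' (carrying $E'$): namely $k$ itself when $f=d(k)$ (and thus excluded from the sum), or the leader $u_{f}$ when $f\neq d(k)$. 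A short case split on the parity of $K_f$ and on whether $f=d(k)$ then gives $\sum_{t\in\mathcal{K}_f\setminus\{k\}}\mathbf{T}_{t,k}=I$ whenever $f\neq d(k)$ (so the coefficient is $I+I=0$), and for $f=d(k)$ gives coefficient exactly $\mathbf{T}_{k,k}$ ($=I$ if $K_{d(k)}$ is odd, $=E'$ if $K_{d(k)}$ is even, matching \eqref{eq:1} with $t=s=k$). Combined with the previous paragraph this yields \eqref{decoding result}.

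I expect the main obstacle to be this last case analysis, in particular pinning down the special user correctly and confirming that the matrix identities deliver precisely $\mathbf{T}_{k,k}$ rather than some other combination; this is also where the fully-demanded hypothesis is indispensable, since the cancellation for $f\neq d(k)$ requires $K_f\geq 1$ (a never-requested file would leave a stray $I$ with no terms to cancel it). By contrast, the swapped-index cancellations and the claim that every right-hand side term is actually available to user $k$---the column parity is prefetched, the row parities are recoverable via Lemma \ref{lemma: linear dependence}, and the delivery symbols via Lemma \ref{lemma:redundancy}---are routine.
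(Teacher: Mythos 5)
Your proposal is correct, and it takes a genuinely different route from the paper's proof. The paper derives \eqref{decoding result} \emph{forward} from the left-hand side: it writes $W^{(T),A}_{d(k),\mathcal{R},k}$ as $\bigoplus_{s\in\mathcal{K}}W^{(T),A}_{d(s),\mathcal{R},k}\oplus\bigoplus_{s\in\mathcal{K}\setminus\{k\}}W^{(T),A}_{d(s),\mathcal{R},k}$, replaces the first sum by $Z^{A}_{\mathcal{R},k}$ via an instrumental lemma (Lemma \ref{z2}), then reorganizes the residual sum---splitting $\mathcal{K}\setminus\{k\}$ into $\mathcal{R}$ and its complement, duplicating terms, and exchanging summation orders---until the row parities and the delivery symbols $Y^{A}_{\mathcal{R}^{+},k}$ emerge, with a final check that the pairwise transformation commutes with the XOR defining the row parities. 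You instead verify the identity \emph{backward}, expanding the right-hand side and tracking the net $2\times 2$ coefficient of every segment pair over $\mathrm{GF}(2)$. The underlying algebra is the same: your evaluation $\sum_{t\in\mathcal{K}_f}\mathbf{T}_{t,k}=I$ for every requested file $f$, obtained from $E\oplus E'=I$ and $I\oplus E=E'$ together with the leader/parity case split, is precisely the matrix form of Lemma \ref{z2}, which the paper proves by the same four-case analysis ($f=d(s)$ or not, $A=I$ or $Q$). What your organization buys is transparency of the alignment mechanism: each swapped index set $(\mathcal{R}\setminus\{t_0\})\cup\{u_0\}$ is hit by exactly one row-parity term and one delivery term, both with coefficient $\mathbf{T}_{t_0,k}$, and the fully-demanded hypothesis is visibly what kills the leftover identity coefficient on $\bm{v}_{f,\mathcal{R}}$ when $f\neq d(k)$. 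What the paper's organization buys is mechanical verifiability: the chain of equalities avoids any global completeness claim about the bookkeeping, and the parity case analysis is quarantined in a separate lemma. If you write your version up, the step demanding the most care is exactly that completeness claim---that no row parity with $t\neq t_0$ and no delivery symbol $Y^{A}_{\mathcal{R}\cup\{u\},k}$ with $u\neq u_0$ can contribute to a given swapped index set; this does hold (for $t\neq t_0$ the index sets of $Z^{(T),A}_{d(t),\mathcal{R}\setminus\{t\},k}$ all contain $t_0$, which the swapped set lacks, and for $u\neq u_0$ the only index set of $Y^{A}_{\mathcal{R}\cup\{u\},k}$ not containing $u$ is $\mathcal{R}$ itself), but your proposal asserts it without argument, so it should be spelled out.
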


The proof of Lemma \ref{decoding intermediate} is given in Appendix \ref{decode}. By Lemma \ref{decoding intermediate}, any symbol in the second class can be written as an XOR of symbols on the right hand side of (\ref{decoding result}), each of which is either prefetched, can be computed, or delivered to user-$k$. Since the pairwise transformation is invertible, the second class of $(W_{d(k),\mathcal{R},k}^I,W_{d(k),\mathcal{R},k}^Q)$ can indeed be recovered from $(W_{ d(k),\mathcal{R},k}^{(T),I}, W_{ d(k),\mathcal{R},k}^{(T),Q})$. The right hand side of (\ref{decoding result}) can be viewed as aligning multiple interference signals $Z^{(T),A}_{d(t),\mathcal{R}\backslash\{t\},k}$ and $Y^{A}_{\mathcal{R}^{+},k}$ such that it can be removed from $Z^{A}_{\mathcal{R},k}$ to recover 
$W_{ d(k),\mathcal{R},k}^{(T),A}$. 

Consider in our running example the decoding process of the original segments $W_{1,\{2\},1}^{I}$ and $W_{1,\{2\},1}^{Q}$ at user 1. In this case $\mathcal{R}=\{2\}$ and $k=1$, and we decode the transformed segments $W_{1,\{2\},1}^{(T),A}$ for $A\in \{I,Q\}$. Since $t\in \mathcal{R}$, $t$ can only take value of $2$, hence $d(k)=d(t)$ and according to \eqref{eq:1} and \eqref{eq:2}, we have
\begin{align*}
& \begin{bmatrix} 
\bigoplus_{t\in\mathcal{R} }Z^{(T),I}_{d(t),\mathcal{R}\backslash\{t\},k} \\
\bigoplus_{t\in\mathcal{R} }Z^{(T),Q}_{d(t),\mathcal{R}\backslash\{t\},k}
\end{bmatrix} 
_{\mathcal{R}=\{2\},k=1}=
\begin{bmatrix} 
Z^{(T),I}_{d(2),\emptyset,1} \\
Z^{(T),Q}_{d(2),\emptyset,1}
\end{bmatrix} \\
= & \mathbf{T}_{2,1} 
\begin{bmatrix}
Z_{1,\emptyset,1}^I \\
Z_{1,\emptyset,1}^Q
\end{bmatrix}
= 
\begin{bmatrix} 
1 & 1 \\
1 & 0 \\
\end{bmatrix}
\begin{bmatrix}
Z_{1,\emptyset,1}^I \\
Z_{1,\emptyset,1}^Q
\end{bmatrix}
=
\begin{bmatrix}
Z_{1,\emptyset,1}^I \oplus Z_{1,\emptyset,1}^Q \\
Z_{1,\emptyset,1}^I
\end{bmatrix}.
\end{align*}
Moreover, since $\mathcal{R}^{+}\supset \mathcal{R}$, it can be $\{2,3\},\{2,4\},\{2,5\}$, or $\{2,6\}$. Therefore, by substituting all variables,  \eqref{decoding result} becomes 
\begin{align*}
W_{ 1,\{2\},1}^{(T),I}=  &Z^{I}_{\{2\},1}\oplus Z^{I}_{1,\emptyset,1} \oplus Z^{Q}_{1,\emptyset,1} \\
&\oplus Y^I_{\{2,3\},1} \oplus  Y^I_{\{2,4\},1} \oplus  Y^I_{\{2,5\},1}\oplus  Y^I_{\{2,6\},1} 
\end{align*}
and 
\begin{align*}
W_{ 1,\{2\},1}^{(T),Q}=  &Z^{Q}_{\{2\},1}\oplus Z^{I}_{1,\emptyset,1} \\
&\oplus Y^Q_{\{2,3\},1}\oplus  Y^Q_{\{2,4\},1} \oplus  Y^Q_{\{2,5\},1}\oplus  Y^Q_{\{2,6\},1},
\end{align*}
where $Z^{A}_{\{2\},1}$, $Z^{A}_{1,\emptyset,1}$ and $Y^A_{\mathcal{R}^+,1}$ can be found in Table \ref{table: product code} and Table \ref{table:IQ2}, from which the correctness of the two equations above can be verified. By inverting \eqref{eq:1} and \eqref{eq:2},  $(W_{ 1,\{2\},1}^{I},W_{ 1,\{2\},1}^{Q})$ can then be recovered.

\section{Conclusion}
In this paper, we consider coded caching systems with restricted demand types. We first showed that fully demanded systems may not be the worst case demand, contrary to popular beliefs. We then proposed a novel code construction for $(N,K)$ when $K \geq N$ and it is known a priori that all files are requested. The proposed code construction can achieve new operating corner points that are not covered by known constructions in the literature. 

\begin{appendices}



      \section{Proof of Lemma \ref{lemma: linear dependence}}\label{Appendix remark}


Define $\mathcal{R}_*^- \triangleq \mathcal{S} ^{-}\cup \{l_k\}$, where $|\mathcal{S} ^{-}|=r-2$. To prove (\ref{eq: linear dependence 1}), we can write 
\begin{align}
&Z^{A}_{f, \mathcal{R}_*^-,k}=\bigoplus_{h\in\mathcal{ K\backslash}\left(\mathcal{S}^{-}\cup\{l_k,k\}\right)}W^{A}_{f, \mathcal{S} ^{-}\cup \{l_k, h\},k}\nonumber\\
&\quad\overset{(a)}{=}\bigoplus_{h\in\mathcal{ K\backslash}\left(\mathcal{S}^{-}\cup\{l_k,k\}\right)}W^{A}_{f, \mathcal{S} ^{-}\cup \{l_k, h\},k}\nonumber\\
&\quad\qquad\oplus \bigoplus_{h_1,h_2\in \mathcal{K} \backslash\left(\mathcal{S}^{-}\cup\{l_k,k\}\right), h_1\neq h_2}W^{A}_{f, \mathcal{S} ^{-}\cup \{h_1,h_2\},k} \notag\\
&\quad=\bigoplus_{h\in\mathcal{ K\backslash}\left(\mathcal{S}^{-}\cup\{l_k,k\}\right)}W^{A}_{f, \mathcal{S} ^{-}\cup \{l_k,h\},k}\nonumber\\
&\quad\oplus \bigoplus_{h_1\in\mathcal{ K\backslash}\left(\mathcal{S}^{-}\cup\{l_k,k\}\right)}\bigoplus_{h\in\mathcal{ K\backslash}\left(\mathcal{S}^{-}\cup\{l_k,k,h_1\}\right)}W^{A}_{f, \mathcal{S} ^{-}\cup \{h_1,h_2\},k} \nonumber\\
&\quad=\bigoplus_{h_1\in\mathcal{ K\backslash}\left(\mathcal{S}^{-}\cup\{l_k,k\}\right)}\bigoplus_{h_2\in \mathcal{K} \backslash( \mathcal{S}^{-}\cup\{h_1,k\})} W^{A}_{f, \mathcal{S} ^{-}\cup \{h_1,h_2\},k} \nonumber\\
&\quad= \bigoplus_{h_1\in\mathcal{ K\backslash}\left(\mathcal{S}^{-}\cup\{l_k,k\}\right)} Z^{A}_{f, \mathcal{S} ^{-}\cup \{h_1\},k},
\end{align}
where $(a)$ is because 
\begin{align}
\bigoplus_{h_1,h_2\in \mathcal{K} \backslash\left(\mathcal{S}^{-}\cup\{l_k,k\}\right), h_1\neq h_2}W^{A}_{f, \mathcal{S} ^{-}\cup \{h_1,h_2\},k}=0
\end{align}
as each $W^{A}_{f, \mathcal{S} ^{-}\cup \{h_1\}\cup\{ h_2\},k}$ appears twice due to the symmetry between $h_1$ and $h_2$.

To prove (\ref{eq: linear dependence 2}), we can write 
\begin{align*}
 Z^{A}_{1,\mathcal{R}^{-},k}=&\bigoplus_{h\in\mathcal{ K\backslash}(\mathcal{R}^{-}\cup\{k\})} W^{A}_{1,\{h\}\cup\mathcal{R}^{-},k}\nonumber\\
=& \bigoplus_{f\in \mathcal{F}} \bigoplus_{h\in\mathcal{ K\backslash}(\mathcal{R}^{-}\cup\{k\})} W^{A}_{f,\{h\}\cup\mathcal{R}^{-},k}\nonumber\\
&\qquad\oplus \bigoplus_{f\in \mathcal{F}\backslash\{1\}} \bigoplus_{h\in\mathcal{ K\backslash}(\mathcal{R}^{-}\cup\{k\})} W^{A}_{f,\{h\}\cup\mathcal{R}^{-},k}\nonumber\\
=&  \bigoplus_{h\in\mathcal{ K\backslash}(\mathcal{R}^{-}\cup\{k\})}
\bigoplus_{f\in \mathcal{F}}W^{A}_{f,h\cup\mathcal{R}^{-},k}\nonumber\\
&\qquad\oplus \bigoplus_{f\in \mathcal{F}\backslash\{1\}} \bigoplus_{h\in\mathcal{ K\backslash}(\mathcal{R}^{-}\cup\{k\})} W^{A}_{f,h\cup\mathcal{R}^{-},k}\nonumber\\
=& \bigoplus_{h\in\mathcal{ K\backslash}(\mathcal{R}^{-}\cup\{k\})}
Z^{A}_{h\cup\mathcal{R}^{-},k}\oplus \bigoplus_{f\in \mathcal{F}\backslash\{1\}} Z^{A}_{f,\mathcal{R}^{-},k}\nonumber\\
=& \bigoplus_{\mathcal{R}:  \mathcal{R}^{-}\subset \mathcal{R}, k\notin \mathcal{R}}
Z^{A}_{\mathcal{R},k}\oplus \bigoplus_{f\in \mathcal{F}\backslash\{1\}} Z^{A}_{f,\mathcal{R}^{-},k}.
\end{align*}
The proof is complete. 
\IEEEQEDhere

\section{Proof of Lemma \ref{decoding intermediate}}\label{decode}
We need the following instrumental lemma in the proof. 
\begin{lemma}\label{z2}
For any $\bm{d}$, $s\in \mathcal{K}$, $\mathcal{R}\subseteq \mathcal{K}\backslash \{s\}$ where $|\mathcal{R}|=r$, and $A\in \{I,Q\}$, 
\begin{align}
 \bigoplus_{t\in \mathcal{K}} W^{(T),A}_{d(t),\mathcal{R},s}=   \bigoplus_{f\in \mathcal{F} } W^{A}_{f,\mathcal{R} ,s} =Z^{A}_{\mathcal{R},s}.  \label{eq: lemma for decoding} 
\end{align}
\end{lemma}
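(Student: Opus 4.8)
The plan is to first dispose of the second equality, which is immediate: by the definition of the column parity, $\bigoplus_{f\in\mathcal{F}}W^{A}_{f,\mathcal{R},s}=Z^{A}_{\mathcal{R},s}$, so only the first equality requires work. To attack it, I would regroup the exclusive-or over users according to the file each user requests. Since every user $t$ requests exactly one file $d(t)$, we may partition $\mathcal{K}=\bigcup_{f\in\mathcal{F}}\mathcal{K}_f$ and write
\begin{align*}
\bigoplus_{t\in\mathcal{K}}W^{(T),A}_{d(t),\mathcal{R},s}=\bigoplus_{f\in\mathcal{F}}\ \bigoplus_{t\in\mathcal{K}_f}W^{(T),A}_{f,\mathcal{R},s}.
\end{align*}
The crucial observation is that for fixed $(f,\mathcal{R},s)$ the \emph{underlying} segment pair $(W^{I}_{f,\mathcal{R},s},W^{Q}_{f,\mathcal{R},s})$ does not depend on $t$; only the transformation matrix $\mathbf{T}_{t,s}$ in \eqref{eq:defofTForW} does. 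Hence the inner exclusive-or collapses, over $\mathbb{F}_2$, to the action of the matrix sum $\bigoplus_{t\in\mathcal{K}_f}\mathbf{T}_{t,s}$ on the fixed pair $(W^{I}_{f,\mathcal{R},s},W^{Q}_{f,\mathcal{R},s})$.

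The problem therefore reduces to the per-file matrix identity $\bigoplus_{t\in\mathcal{K}_f}\mathbf{T}_{t,s}=\begin{bmatrix}1&0\\0&1\end{bmatrix}$ over $\mathbb{F}_2$, which I would verify by splitting on the parity of $K_f$. If $K_f$ is odd, every $\mathbf{T}_{t,s}$ equals the identity by the first branch of \eqref{eq:1}/\eqref{eq:2}, and an odd number of identities sums to the identity. If $K_f$ is even, exactly one user of $\mathcal{K}_f$ is ``distinguished'' and contributes $\begin{bmatrix}0&1\\1&1\end{bmatrix}$, while the remaining $K_f-1$ (an odd count) users each contribute $\begin{bmatrix}1&1\\1&0\end{bmatrix}$; their sum over $\mathbb{F}_2$ is $\begin{bmatrix}0&1\\1&1\end{bmatrix}\oplus\begin{bmatrix}1&1\\1&0\end{bmatrix}=\begin{bmatrix}1&0\\0&1\end{bmatrix}$, again the identity. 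The one point needing care is which user is distinguished: when $d(s)=f$ this is $s$ itself (third branch of \eqref{eq:1}), and when $d(s)\neq f$ it is the leader $u_f$ (third branch of \eqref{eq:2}).

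Consequently each inner exclusive-or equals $W^{A}_{f,\mathcal{R},s}$, and summing over $f$ yields $\bigoplus_{f\in\mathcal{F}}W^{A}_{f,\mathcal{R},s}=Z^{A}_{\mathcal{R},s}$, completing the argument. The main obstacle is purely the bookkeeping of the even case: one must confirm that in each even-$K_f$ scenario precisely one user of $\mathcal{K}_f$ receives the special matrix $\begin{bmatrix}0&1\\1&1\end{bmatrix}$ and that this user indeed lies in $\mathcal{K}_f$. This is exactly where the fully-demanded hypothesis enters --- it guarantees $\mathcal{K}_f\neq\emptyset$ so that the leader $u_f$ exists and the count $K_f-1$ is a genuine odd number; without it the matrix sum would vanish for an unrequested file and the identity would fail.
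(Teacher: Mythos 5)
Your proof is correct and takes essentially the same route as the paper's: both reduce the claim to the per-file identity $\bigoplus_{t\in\mathcal{K}_f} W^{(T),A}_{d(t),\mathcal{R},s}=W^{A}_{f,\mathcal{R},s}$ by grouping users according to the file they request, then split on the parity of $K_f$, using the fact that in the even case exactly one user of $\mathcal{K}_f$ (namely $s$ if $f=d(s)$, else the leader $u_f$) carries the special transformation while the remaining odd number of users carry the other one. The only difference is presentational: where the paper verifies the even case in four component-wise subcases ($f=d(s)$ or not, $A=I$ or $Q$), you verify the single identity that $\bigoplus_{t\in\mathcal{K}_f}\mathbf{T}_{t,s}$ equals the $2\times 2$ identity matrix over $\mathbb{F}_2$, which handles both channels and both subcases at once.
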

The proof of Lemma \ref{z2} can be found in Appendix \ref{appendix: transformed segments}.

\begin{IEEEproof}[Proof of Lemma \ref{decoding intermediate}]
We first write the following chain of equalities
\begin{align}
&W^{(T),A}_{ d(k),\mathcal{R},k}= \bigoplus_{s\in \mathcal{K} }W^{(T),A}_{d(s),\mathcal{R} ,k} \oplus  \bigoplus_{s\in \mathcal{K}\backslash\{k\} }W^{(T),A}_{d(s),\mathcal{R} ,k} \nonumber\\
&\quad\overset{(a)}{=} Z^{A}_{\mathcal{R},k} \oplus  \bigoplus_{v \in \mathcal{K}\backslash\{k\} }W^{(T),A}_{ d(v),\mathcal{R} ,k} \nonumber\\
&\quad\overset{(b)}{=} Z^{A}_{\mathcal{R},k} \oplus  \bigoplus_{v \in \mathcal{R} } W^{(T),A}_{ d(v),\mathcal{R} ,k}  \oplus  \bigoplus_{v \in \mathcal{K} \setminus (\mathcal{R}\cup \{k\})}W^{(T),A}_{ d(v),\mathcal{R} ,k} \nonumber\\
 &\qquad\quad\oplus  \bigoplus_{v \in \mathcal{K} \setminus (\mathcal{R}\cup \{k\})} \bigoplus_{t\in\mathcal{R} \cup \left\{ v \right\}} W^{(T),A}_{ d(t), \mathcal{R} \cup \left\{ v \right\} \backslash \{t\} ,k}\nonumber\\
 &\qquad\quad\oplus  \bigoplus_{v \in \mathcal{K} \setminus (\mathcal{R}\cup \{k\})} \bigoplus_{t\in\mathcal{R} \cup \left\{ v \right\}} W^{(T),A}_{ d(t), \mathcal{R} \cup \left\{ v \right\} \backslash \{t\} ,k}\nonumber\\
&\quad= Z^{A}_{\mathcal{R},k} \oplus  \bigoplus_{v \in \mathcal{R} } W^{(T),A}_{ d(v),\mathcal{R} ,k} \nonumber\\
 &\,\oplus  \bigoplus_{v \in \mathcal{K} \setminus (\mathcal{R}\cup \{k\})} \left(\bigoplus_{t\in\mathcal{R} \cup \left\{ v \right\}} W^{(T),A}_{ d(t), \mathcal{R} \cup \left\{ v \right\} \backslash \{t\} ,k}\oplus W^{(T),A}_{ d(v),\mathcal{R} ,k}\right)\nonumber\\
 &\qquad\quad\oplus  \bigoplus_{v \in \mathcal{K} \setminus (\mathcal{R}\cup \{k\})} \bigoplus_{t\in\mathcal{R} \cup \left\{ v \right\}} W^{(T),A}_{ d(t), \mathcal{R} \cup \left\{ v \right\} \backslash \{t\} ,k}\nonumber\\
&\quad = Z^{A}_{\mathcal{R},k} \oplus  \bigoplus_{v \in \mathcal{R} } W^{(T),A}_{ d(v),\mathcal{R} ,k} \nonumber\\
 &\qquad\quad\oplus  \bigoplus_{v \in \mathcal{K} \setminus (\mathcal{R}\cup \{k\})} \bigoplus_{t\in\mathcal{R} } W^{(T),A}_{ d(t), \mathcal{R} \cup \left\{ v \right\} \backslash \{t\} ,k}\nonumber\\
 &\qquad\quad\oplus  \bigoplus_{v \in \mathcal{K} \setminus (\mathcal{R}\cup \{k\})} \bigoplus_{t\in\mathcal{R} \cup \left\{ v \right\}} W^{(T),A}_{ d(t), \mathcal{R} \cup \left\{ v \right\} \backslash \{t\} ,k}\nonumber\\
&\quad\overset{(c)}{=} Z^{A}_{\mathcal{R},k} \oplus  \bigoplus_{t \in \mathcal{R} } W^{(T),A}_{ d(t),\mathcal{R} ,k} \nonumber\\
 &\qquad\quad\oplus   \bigoplus_{t\in\mathcal{R} } \bigoplus_{v \in \mathcal{K} \setminus (\mathcal{R}\cup \{k\})} W^{(T),A}_{ d(t), \mathcal{R} \cup \left\{ v \right\} \backslash \{t\} ,k}\nonumber\\
 &\qquad\quad\oplus  \bigoplus_{v \in \mathcal{K} \setminus (\mathcal{R}\cup \{k\})} \bigoplus_{t\in\mathcal{R} \cup \left\{ v \right\}} W^{(T),A}_{ d(t), \mathcal{R} \cup \left\{ v \right\} \backslash \{t\} ,k}\notag\\
 &\quad\overset{(d)}{=}Z^{A}_{\mathcal{R},k} \oplus  \bigoplus_{t\in\mathcal{R} }\bigoplus_{v\in\mathcal{K} \backslash (\mathcal{R}\backslash\{t\}\cup\{k\})} W^{(T),A}_{d(t),v\cup\mathcal{R}\backslash\{t\},k}\nonumber\\
 &\qquad\quad\oplus  \bigoplus_{v \in \mathcal{K} \setminus (\mathcal{R}\cup \{k\})} \bigoplus_{t\in\mathcal{R} \cup \left\{ v \right\}} W^{(T),A}_{ d(t), \mathcal{R} \cup \left\{ v \right\} \backslash \{t\} ,k}\notag\\
&\quad\overset{(e)}{=} Z^{A}_{\mathcal{R},k} \oplus  \bigoplus_{t\in\mathcal{R} }\bigoplus_{v\in\mathcal{K} \backslash (\mathcal{R}\backslash\{t\}\cup\{k\})} W^{(T),A}_{d(t),v\cup\mathcal{R}\backslash\{t\},k}\notag\\
&\qquad\quad\oplus \bigoplus_{\mathcal{R}^{+}: \mathcal{R}\subset \mathcal{R}^{+},k\notin \mathcal{R}^{+}} Y^{(T),A}_{\mathcal{R}^{+},k},\label{eqn:partlemma3}
\end{align}
where the steps can be justified as follows
\begin{enumerate}[label=(\alph*)]
\item Holds due to Lemma \ref{z2}; 
\item By splitting $\mathcal{K}\backslash\{ k\}$ into two mutually exclusive sets $\mathcal{R}$ and $\mathcal{K}\backslash (\mathcal{R}\cup \{k\})$, and then adding the same terms twice;
\item By exchanging the order of summations;
\item Due to the fact that for $t\in \mathcal{R}$
\begin{align*}
&\bigoplus_{v\in\mathcal{K} \backslash(\mathcal{R}\cup \left\{ k \right\})} W^{(T),A}_{ d(t), \mathcal{R} \cup \left\{ v \right\} \backslash \{t\} ,k}\oplus W^{(T),A}_{ d(t),\mathcal{R} ,k}\nonumber\\
&\quad=   \bigoplus_{v\in\mathcal{K} \backslash(\mathcal{R}\cup \left\{ k \right\})\cup \{t\}} W^{(T),A}_{ d(t), \mathcal{R} \cup \left\{ v \right\} \backslash \{t\} ,k}\nonumber\\
&\quad=  \bigoplus_{v\in\mathcal{K} \backslash (\mathcal{R}\backslash\{t\}\cup\{k\})}W^{(T),A}_{d(t),v\cup\mathcal{R}\backslash\{t\},k};
\end{align*}
\item By letting $\mathcal{R}^{+}=\mathcal{R} \cup \left\{ v \right\}$, from which it follows that
\begin{align*}
 & \bigoplus_{v \in \mathcal{K} \setminus (\mathcal{R}\cup \{k\})} \bigoplus_{t\in\mathcal{R} \cup \left\{ v \right\}} W^{(T),A}_{ d(t), \mathcal{R} \cup \left\{ v \right\} \backslash \{t\} ,k}\\
&\qquad=\bigoplus_{\mathcal{R}^{+}: \mathcal{R}\subset \mathcal{R}^{+},k\notin \mathcal{R}^{+}} \bigoplus_{t\in\mathcal{R}^{+}}W^{(T),A}_{ d(t),\mathcal{R}^{+}\backslash \{t\},k}\\
&\qquad=\bigoplus_{\mathcal{R}^{+}: \mathcal{R}\subset \mathcal{R}^{+},k\notin \mathcal{R}^{+}}Y^{(T),A}_{\mathcal{R}^{+},k}.
\end{align*}
\end{enumerate}

Comparing (\ref{decoding result}) and (\ref{eqn:partlemma3}), it is clear that it only remains to show that 
\begin{align}\label{eq:ZTA}
Z^{(T),A}_{d(t),\mathcal{R}\backslash\{t\},k}=\bigoplus_{v\in\mathcal{K} \backslash (\mathcal{R}\backslash\{t\}\cup\{k\})}W^{(T),A}_{d(t),v\cup\mathcal{R}\backslash\{t\},k}.
\end{align}  
To see this, note that by the definition of $ Z^{A}_{f,\mathcal{R}^{-},k}$
\begin{align}
    &\begin{bmatrix}Z^{(T),I}_{d(t),\mathcal{R}\backslash\{t\},k}\\Z^{(T),Q}_{d(t),\mathcal{R}\backslash\{t\},k}\end{bmatrix} =\mathbf{T}_{t,k}  \begin{bmatrix}Z^{I}_{d(t),\mathcal{R}\backslash\{t\},k}\\Z^{Q}_{d(t),\mathcal{R}\backslash\{t\},k}\end{bmatrix}\notag\\
    &\quad= \mathbf{T}_{t,k}\begin{bmatrix}
    \bigoplus_{u\in\mathcal{ K\backslash}(\mathcal{R}\setminus\{t\}\cup \{k\})} W^{I}_{d(t),\{u\}\cup(\mathcal{R}\setminus\{t\}),k}\\
    \bigoplus_{u\in\mathcal{ K\backslash}(\mathcal{R}\setminus\{t\}\cup \{k\})} W^{Q}_{d(t),\{u\}\cup(\mathcal{R}\setminus\{t\}),k}
    \end{bmatrix}\notag\\
    &\quad = \bigoplus_{u\in\mathcal{ K\backslash}(\mathcal{R}\setminus\{t\}\cup \{k\})} \left(\mathbf{T}_{t,k}\begin{bmatrix}W^{I}_{d(t),\{u\}\cup(\mathcal{R}\setminus\{t\}),k}\\
     W^{Q}_{d(t),\{u\}\cup(\mathcal{R}\setminus\{t\}),k}
    \end{bmatrix}\right)\notag\\
    &\quad = \bigoplus_{u\in\mathcal{ K\backslash}(\mathcal{R}\setminus\{t\}\cup \{k\})}\begin{bmatrix}W^{(T),I}_{d(t),\{u\}\cup(\mathcal{R}\setminus\{t\}),k}\\
    W^{(T),Q}_{d(t),\{u\}\cup(\mathcal{R}\setminus\{t\}),k}
    \end{bmatrix},
\end{align}
which completes the proof.\end{IEEEproof}

%

\section{Proof of Lemma \ref{z2}} \label{appendix: transformed segments}
We claim that for any $f\in \mathcal{F}$, when $K_f\neq \emptyset$, 
\begin{align}
\bigoplus_{t\in \mathcal{K}_f} W^{(T),A}_{d(t),\mathcal{R},s}  =   W^{A}_{f,\mathcal{R},s}, \label{fact}
\end{align}
from which the lemma follows trivially.

When $K_f$ is odd, \eqref{fact} holds true trivially. We only need to consider the case when $K_f$ is even. 
\begin{itemize}
\item $f=d(s)$ and $A=I$: 
\begin{align}
&\bigoplus_{t\in \mathcal{K}_{f}} W^{(T),I}_{d(t),\mathcal{R},s}=\bigoplus_{t\in \mathcal{K}_{f}} W^{(T),I}_{f,\mathcal{R},s} \notag\\
&\,\,=\bigoplus_{t\in \mathcal{K}_{f}\backslash \{s\}}\left( W^{I}_{f,\mathcal{R},s}\oplus W^{Q}_{f,\mathcal{R},s}\right)\oplus  W^{Q}_{f,\mathcal{R},s}\notag\\
&\,\,\overset{(a)}{=} W^{I}_{f,\mathcal{R},s}\oplus W^{Q}_{f,\mathcal{R},s}\oplus W^{Q}_{f,\mathcal{R},s}=W^{I}_{f,\mathcal{R},s},
\end{align}
where (a) is true because $K_f$ is even. 
\item $f=d(s)$ and $A=Q$:
\begin{align}
&\bigoplus_{t\in \mathcal{K}_{f}} W^{(T),Q}_{d(t),\mathcal{R},s} =\bigoplus_{t\in \mathcal{K}_{f}} W^{(T),Q}_{f,\mathcal{R},s} \notag\\
&\quad= \bigoplus_{t\in \mathcal{K}_{f}\backslash \{s\}} W^{I}_{f,\mathcal{R},s}\oplus  W^{I}_{f,\mathcal{R},s}\oplus  W^{Q}_{f,\mathcal{R},s}\notag\\
&\quad= W^{I}_{f,\mathcal{R},s}\oplus  W^{I}_{f,\mathcal{R},s}\oplus  W^{Q}_{f,\mathcal{R},s}= W^{Q}_{f,\mathcal{R},s},
\end{align}
for the same reason as in the previous case. 
\item $f\neq d(s)$ and $A=I$:
\begin{align}
&\bigoplus_{t\in \mathcal{K}_{f}} W^{(T),I}_{d(t),\mathcal{R},s}=\bigoplus_{t\in \mathcal{K}_{f}} W^{(T),I}_{f,\mathcal{R},s} \notag\\
&\,\,=\bigoplus_{t\in \mathcal{K}_{f}\backslash \{u_f\}}\left( W^{I}_{f,\mathcal{R},s}\oplus W^{Q}_{f,\mathcal{R},s}\right)\oplus  W^{Q}_{f,\mathcal{R},s}\notag\\
&\,\,= W^{I}_{f,\mathcal{R},s}\oplus W^{Q}_{f,\mathcal{R},s}\oplus W^{Q}_{f,\mathcal{R},s}=W^{I}_{f,\mathcal{R},s},
\end{align}
\item $f\neq d(s)$ and $A=Q$:
\begin{align}
&\bigoplus_{t\in \mathcal{K}_{f}} W^{(T),Q}_{d(t),\mathcal{R},s} =\bigoplus_{t\in \mathcal{K}_{f}} W^{(T),Q}_{f,\mathcal{R},s} \notag\\
&\quad= \bigoplus_{t\in \mathcal{K}_{f}\backslash \{u_f\}} W^{I}_{f,\mathcal{R},s}\oplus  W^{I}_{f,\mathcal{R},s}\oplus  W^{Q}_{f,\mathcal{R},s}\notag\\
&\quad= W^{I}_{f,\mathcal{R},s}\oplus  W^{I}_{f,\mathcal{R},s}\oplus  W^{Q}_{f,\mathcal{R},s}= W^{Q}_{f,\mathcal{R},s}.
\end{align}
\end{itemize}
Lemma \ref{z2} is thus proved. \IEEEQEDhere
\end{appendices}

\bibliographystyle{ieeetr}

\end{document}